\newtheorem{theorem}{Theorem}
\newtheorem{lemma}{Lemma} 
\newtheorem{remark}{Remark}
\newtheorem{assumption}{Assumption}
\newtheorem{corollary}{Corollary}
\newtheorem{proposition}{Proposition}
\newtheorem{condition}{Condition}
\newenvironment{customcnd}[1]
  {\innercustomcnd}
  {\endinnercustomcnd}
\newcommand{\bx}{\textbf{x}}
\newcommand{\aMST}{\texttt{aMST}}
\newcommand{\uMST}{\texttt{uMST}}
\newcommand{\aMDP}{\texttt{aMDP}}
\newcommand{\uNNG}{\texttt{uNNG}}
\newcommand{\CuMST}{\texttt{C-uMST}}
\newcommand{\CuNNG}{\texttt{C-uNNG}}
\newcommand{\Mstar}{\mathcal{M}_0^*}
\newcommand{\calT}{\mathcal{T}}
\newcommand{\calC}{\mathcal{C}}
\newcommand{\calE}{\mathcal{E}}
\newcommand{\calO}{\mathcal{O}}
\newcommand{\calA}{\mathcal{A}}
\newcommand{\bfP}{\mathbf{P}}
\newcommand{\bfE}{\mathbf{E}}
\newcommand{\bfPP}{\bfP_\texttt{P}}
\newcommand{\bfEP}{\bfE_\texttt{P}}
\newcommand{\VP}{\text{\textbf{Var}}_\texttt{P}}
\newcommand{\RC}{R_{ C_0}}
\begin{document}


\renewcommand{\baselinestretch}{1.2}


\markboth{\hfill{\footnotesize\rm Hao Chen AND Nancy Zhang} \hfill}
{\hfill {\footnotesize\rm Graph-Based Tests for Categorical Data} \hfill}

\renewcommand{\thefootnote}{}
$\ $\par


\fontsize{10.95}{14pt plus.8pt minus .6pt}\selectfont
\vspace{0.8pc}
\centerline{\large\bf Graph-Based Tests for Two-Sample Comparisons}
\centerline{\large \bf of Categorical Data}
\vspace{.4cm}
\centerline{Hao Chen}
\vspace{.4cm}
\centerline{\it Department of Statistics, Stanford University}
\vspace{.4cm}
\centerline{Nancy R. Zhang}
\vspace{.4cm}
\centerline{\it Department of Statistics, The Wharton School, University of Pennsylvania}
\vspace{.55cm}
\fontsize{9}{11.5pt plus.8pt minus .6pt}\selectfont


\begin{quotation}
\noindent {\it Abstract:}

\par

We study the problem of two-sample comparison with categorical data when the contingency table is sparsely populated.  In modern applications, the number of categories is often comparable to the sample size, causing existing methods to have low power.  When the number of categories is large, there is often underlying structure on the sample space that can be exploited.   We propose a general non-parametric approach that utilizes similarity information on the space of all categories in two sample tests.   Our approach extends the graph-based tests of \cite{friedman1979multivariate} and \cite{rosenbaum2005exact}, which are tests base on graphs connecting observations by similarity.  Both tests require uniqueness of the underlying graph and cannot be directly applied on categorical data.  We explored different ways to extend graph-based tests to the categorical setting and found two types of statistics that are both powerful and fast to compute.  We showed that their permutation null distributions are asymptotically normal and that their $p$-value approximations under typical settings are quite accurate, facilitating the application of the new approach.  The approach is illustrated through several examples.

\vspace{9pt}

\noindent {\it Key words and phrases:}
Two-sample tests, categorical data, discrete data, minimum spanning trees, graph-based tests, contingency table.
\par
\end{quotation}\par


\fontsize{10.95}{14pt plus.8pt minus .6pt}\selectfont

\section{Introduction}
\label{sec:intro}
    Testing whether two data samples are drawn from the same distribution is a fundamental problem in statistics.  For low-dimensional Euclidean data, there are many approaches, both parametric and non-parametric, to this problem.  When the data are categorical, the existing approaches are much more limited.  The standard procedure is to assume that each sample is drawn from a multinomial distribution, and the comparison becomes a test of whether the two samples come from the same multinomial distribution.  Classical methods, such as the Pearson's Chi-square test and the deviance test, work well when we observe each category a large number of times.  At least, the region in the contingency table where the two groups truly differ needs to be adequately sampled for existing tests to achieve good power.  However, in many modern applications, the number of possible categories is comparable to or even larger than the sample size.  Some examples are the following.

    \begin{description}
        \item[Preference rankings:]  Survey data in marketing or psychometric research often come in the form of preference rankings.   Subjects may be asked to rate wine (rank from best to worst tasting), pictures (choose 3 most familiar out of 5), or insurance plans (identify the most and least desirable).  See \cite{diaconis1988group} and \cite{critchlow1985metric} for more detailed examples on ranked and partially ranked data.  It is a common problem to compare two groups of subjects to see if there is any between-group difference in preference.  The number of possible full rankings is the factorial of the number of objects being rated, and the number of possible rankings is higher if some subjects only partially rank the objects.

        \item[Haplotype association:]  In genetics, a haplotype is a combination of alleles at adjacent loci on a chromosome that is transmitted together. A common problem of genetic association studies is to compare haplotype counts between treatment and control groups (e.g. see \cite{zaykin2002testing} and Furihata, Ito and Kamatani \citeyearpar{furihata2006test}). Each haplotype can be represented as a fixed-length binary vector.  The number of possible haplotypes is exponential in the number of loci.  Haplotypes that are longer than 10 are often of interest in genetics, leading to $>1,000$ possible combinations.  However, the number of subjects in association studies is often only in the thousands or even hundreds, and the counts for most haplotypes are small.

        \item[Sequence or document comparisons:]  In the modern age of digitized texts, it is often of interest to compare the word composition in two different documents.  A similar problem is the comparison of DNA or protein sequences, which plays a large role in bioinformatics (Lippert, Huang and Waterman \citeyearpar{lippert2002distributional}).  The number of possible words in these applications can be very large, while the counts for most words are small.  For recent interest in this problem see \citet{PerryBeiko}, \citet{BushLahn} and Rajan, Aravamuthan and Mande \citeyearpar{Rajanetal} for examples.
    \end{description}

Classical Chi-square tests have low power in these scenarios due to sparsity of the contingency table and high dimensionality of the parameter space.  
For exact tests, it is possible to generalize the concept to the setting of more than two categories, but this is computationally challenging (\cite{mehta1983network}) and not efficient due to the existence in high dimensions of many tables that have the same probability as the one observed.

When the number of categories is very large,  there is often underlying similarity between different categories that can be exploited.  For example, rankings can be related through Kendall's or Spearman's distance.  Hamming distance or other more sophisticated measures can be used to compare haplotypes and fixed-length words in DNA sequences.  In document comparison, the similarities between words are not equally likely: Some words are synonyms of others; Some are more likely to be used together.  Such similarity information between categories can be used to improve the power of two-sample tests.  

We assume that a distance matrix has been given on the set of categories, and adopt the graph-based approach proposed by \cite{friedman1979multivariate} and \cite{rosenbaum2005exact}, where a graph is constructed on all subjects so that subjects more similar in value are connected by an edge.  Friedman and Rafsky's test is based on a minimum spanning tree (MST), and Rosenbaum's test is based on minimum distance pairing (MDP).  The test statistic in both cases is the number of edges connecting subjects from different groups.  The underlying rationale is that, if two groups come from the same distribution, subjects coming from the same group should be as distant to each other as subjects coming from different groups.  
More details of these tests are given in Section \ref{sec:graphreview}.  Both tests, however, require uniqueness of the underlying graphs.  When the distance matrix on subjects is filled with ties, which is characteristic of categorical data, neither approach can be directly applied.

Ties in the distance matrix lead to ambiguity in constructing the MST or MDP, and the number of possible graphs increases rapidly with the number of ties. Some efforts were made to address this problem.  In the analysis of a partially ranked data set with 38 subjects in 23 categories, \cite{critchlow1985metric} tried both the graph obtained from the union of all MSTs (uMST), and the graph obtained from the union of all nearest neighbor graphs (uNNG).  \cite{nettleton2001testing} also used uNNG on a binary clinical feature data set with 64 subjects in 63 categories.  In general, nearest neighbor graphs do not work well for categorical data, see Section \ref{sec:stat}.  In this paper, Critchlow's method using the uMST is studied in more detail and a computationally tractable form for categorical data is given.  A different statistic, based on averaging over all optimal graphs of a certain kind, is also proposed and analyzed.


In Section \ref{sec:stat}, analytically tractable forms of the two statistics based on averaging over and union of minimum spanning trees are derived and compared via simulation to statistics based on MDP and NNG.  
While the two MST-based tests are shown to be more powerful than the MDP- and NNG-based tests, neither the averaged nor the union-based statistic dominate in power for the simulation scenarios explored.
Algorithmic details for computing these two statistics are described and, in particular, the averaged statistic is shown to be computationally intractable for some problems.  A generalized version of the averaged statistic, with better computational properties, is proposed.  In Section \ref{sec:examples}, the graph-based approach is illustrated in simulations and data examples, and shown to have much better power than Chi-square tests.  In Section \ref{sec:null}, permutation null distributions of the proposed statistics are described.  After mean- and variance- standardization, the statistics are shown to be asymptotically normal, under certain assumptions on the cell counts and the graph's structure, as the number of observed categories goes to infinity.  

\section{Preliminaries}
\subsection{Notations}

We start by introducing our notation.  The different
categories are indexed by $1, 2, \dots, K$, with arbitrary naming of the categories.  The two groups are labeled $a$ and $b$, and the data are given in the form of a two-way contingency table (Table \ref{tab:cont}).  Without loss of generality, we assume that each category has at least one subject over the two groups.  That is, categories with no observation in either group can be omitted from the analysis without loss of information.

\begin{table}[h] \centering
 \caption{Basic Notations.}\label{tab:cont}

\bigskip

  \begin{tabular}{|c|c|c|c|c|c|} \hline & 1 & 2 & $\dots$ & K & Total \\ \hline \hline
Group $a$ & $n_{a1}$ & $n_{a2}$ & $\dots$ & $n_{aK}$ & $n_a$ \\ \hline
Group $b$ & $n_{b1}$ & $n_{b2}$ & $\dots$ & $n_{bK}$ & $n_b$ \\ \hline
Total & $m_1$ & $m_2$ & $\dots$ & $m_K$ & $N$ \\ \hline
    \end{tabular}
$$ m_k  = n_{ak} + n_{bk},\ k = 1,\dots, K;$$
$ n_a  =\sum_{k=1}^K n_{ak},\quad n_b=\sum_{k=1}^K n_{bk}, \quad N  = n_a + n_b = \sum_{k=1}^K m_k. $

\end{table}
We sometimes refer to individual subjects themselves and denote them by $Y_1,\dots,Y_N$.  Thus, each $Y_i$ takes value in $\{1,\dots,K\}$ and has a group label
\begin{equation}
g_i = \left\{
        \begin{array}{ll}
          a, & \hbox{if $Y_i$ belongs to group $a$;} \\
          b, & \hbox{if $Y_i$ belongs to group $b$.}
        \end{array}
      \right.
\end{equation}
We assume that a distance matrix, $\{d(i,j):~i,j=1,\dots,K\}$ has been given on the set of possible categories, with $d(i,j)$ small if categories $i$ and $j$ are similar.  Possible ways of defining the distance matrix are shown for various examples in Section \ref{sec:intro}.

A graph $G$ is defined by its vertices and edges.  We use $G$ to refer to both the graph and its set of edges when the vertex set is implicitly obvious.  $|\cdot|$ is used to denote the size of the set, so $|G|$ is the number of edges in $G$.  For any node $i$ in the graph $G$, $\mathcal{E}_i^G$ denotes the set of edges in $G$ that contain node $i$, $\mathcal{V}_i^G$ denotes the set of nodes in $G$ that are connected to node $i$ by an edge, and $\mathcal{E}_{i,2}^G$ denotes the set of edges in $G$ that contain at least one node in $\mathcal{V}_i^G$.  For any event $A$, $I_A$ is the indicator function that takes value 1 if $A$ is true and 0 otherwise.


\subsection{A Review of Graph-Based Two-Sample Tests} \label{sec:graphreview}
By \emph{graph-based two-sample tests}, we refer to tests that are based on graphs with the subjects $\{Y_i\}$ as nodes.  We here suppose $\{Y_i\}$ take distinct values such that certain graphs can be constructed uniquely.  The graph can be constructed using the distance matrix on $\{Y_i\}$.  \cite{friedman1979multivariate} proposed the first graph-based two-sample test as a generalization of the Wald-Wolfowitz runs test to multivariate settings.
Their test is based on a MST on the subjects, which is a spanning tree connecting all subjects that minimizes the sum of distances across edges.  The Friedman-Rafsky test is based on the number of edges connecting subjects across different groups:
\begin{equation} \label{sumedges}
    \sum_{(i,j) \in G} I_{g_i \neq g_j},
\end{equation}
where $G$ is the MST.  The statistic is standardized to have mean zero and variance one, and its value is compared to the null distribution obtained by permuting the group labels.  Friedman and Rafsky showed that, while this test has low power in low dimensions,  it has comparable power to likelihood ratio tests in a numerical study of normal data in $>20$ dimensions, and higher power when the normal assumption is violated.

Another graph-based two-sample method, the cross-match test, was proposed by \cite{rosenbaum2005exact}.  This test is based on a minimum distance non-bipartite pairing (MDP) that divides the $N$ subjects into $N/2$ (assuming $N$ is even) non-overlapping pairs in such a way as to minimize the total of $N/2$ distances between pairs.  For odd $N$ Rosenbaum suggested creating a pseudo data point that has distance 0 with all other subjects, and later discarding the pair containing this pseudo point.  The sum (\ref{sumedges}) is computed with $G$ set to the MDP.  The test statistic is the mean- and variance- standardized version of this sum.  Note that the topology of the MDP does not depend on the distance matrix, with each node always having degree 1.  This fact makes the test based on MDP truly distribution-free under the null hypothesis.

Both methods assume uniqueness of the type of graph used.  For categorical data, ties appear in the distance matrix whenever a category has multiple counts.  Even sparse contingency tables have quite a few cells containing more than one subject.   The number of possible graphs grows rapidly with the number of ties.  Thus, Friedman and Rafsky's and Rosenbaum's methods cannot be directly applied to categorical data.
For categorical data, distances are often based on qualitative measures and thus, while their relative ranking may be trustworthy, their absolute scale is not.  Hence, we do not consider methods based directly on the distance matrix.  While there are many ways to construct a graph based on a distance matrix, we limit our study to MST, MDP, and NNG as representative.  Figure \ref{fig:graph1} illustrates the three different types of graphs on a simple example containing six points. These six points take on six distinct values.

\begin{figure} \centering
  \includegraphics[width=.6\textwidth]{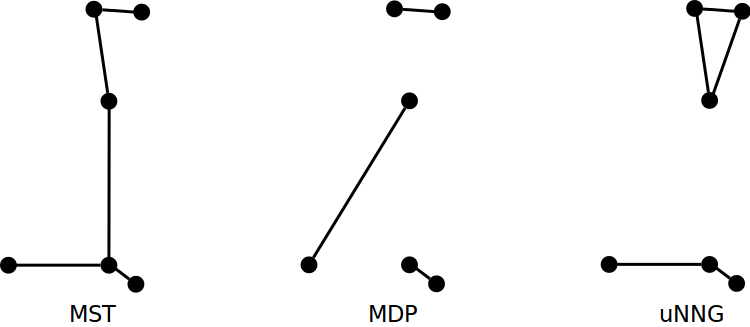}
  \caption{Illustration of MST, MDP, and NNG on six points.  Notice that only one of the two possible MSTs on the six points and one of the two possible NNGs on the six points are shown.}
  \label{fig:graph1}
\end{figure}

\section{Generalized Graph-Based Test Statistics} \label{sec:stat}
One natural solution, when the optimizing graph is not unique, is to average the test statistic over all graphs of the given kind.   In this section, we consider the statistic based on averaging \eqref{sumedges} over all MSTs ($R_\aMST$).  Another solution to non-uniqueness it to take the union over all optimizing graphs, such as the statistic based on the uMST ($R_\uMST$).  $R_\aMST$ and $R_\uMST$ are analytically tractable and intuitively appealing, and their derivations are shown in Section \ref{sec:mststat}.  For comparison, we also consider the statistic based on averaging \eqref{sumedges} over all MDPs, $R_\aMDP$, and the statistic based on uNNG, $R_\uNNG$.  Computation of $R_\aMDP$, described in Appendix \ref{sec:Rmdp}, is often intractable.  Computation of uNNG is instantaneous.   In Section \ref{sec:sim}, we study by simulation the performance of $R_\aMST, R_\uMST, R_\aMDP,$ and $R_\uNNG$, comparing them to each other and to Chi-square tests.  Our results show that tests based on minimum spanning trees have the best power, and the intuition for this is explained.   The statistics based on uMDP and average over all NNGs are not included in the comparison because they do not have the potential of high power according to the performance of $R_\aMDP$ and $R_\uNNG$ in Section \ref{sec:sim}, and calculating them is not instant.  To clarify ambiguities, $R_G$ is used to denote the test statistic on graph $G$ in general, with exceptions for $R_\aMST$ and $R_\aMDP$.

 Computation of $R_\aMST$ and $R_\uMST$ is described in more detail in Section \ref{sec:comp}.  When the number of MSTs on \emph{categories} is large, which is common for categorical data, computation for $R_\aMST$ can be very costly.  We generalize the statistic based on $R_\aMST$ to a similar but simpler form in Section \ref{sec:general}.

    \subsection{The Test Statistics Based on MST} \label{sec:mststat}
\subsubsection{$R_\aMST$} \label{sec:RaMST}
For each $k=1,\dots,K$, let $\calC_k \subset \{1,\dots,N\}$ be the subjects that belong to category $k$.  From Table \ref{tab:cont}, $|\calC_k|=m_k$.  Let $\calT_k$ be the set of all spanning trees for $\calC_k$.  Since the distance between any two subjects in $\calC_k$ is zero, any spanning tree of $\calC_k$ is a MST of $\calC_k$.  Let $\calT^*_0$ be the set of all MSTs on the categories. We can embed each tree in $\calT^*_0$ as a graph on the subjects by randomly picking one subject in $\calC_k$ to represent category $k$, for $k=1,\dots,K$. For each $\tau_0^* \in \calT^*_0$, there are $\prod_{k=1}^K m_k^{|\mathcal{E}_k^{\tau_0^*}|} $
different embeddings.  For example, Figure \ref{fig:mst_real} shows 3 out of 15552 ($=2\cdot 3^3 \cdot 1 \cdot 4^2 \cdot 3^2 \cdot 2$) possible embeddings for a MST on six categories containing 2, 3, 1, 4, 3 and 2 subjects.  Let $\calT_0$ be the set of all graphs obtained from embedding a tree from $\calT^*_0$ on the subjects. Then $|\calT_0| =  \sum_{\tau_0^*\in\calT_0^*}
\left(\prod_{k=1}^K m_k^{|\mathcal{E}_k^{\tau_0^*}|} \right)$.
\begin{figure} \centering
  \includegraphics[width=.6\textwidth]{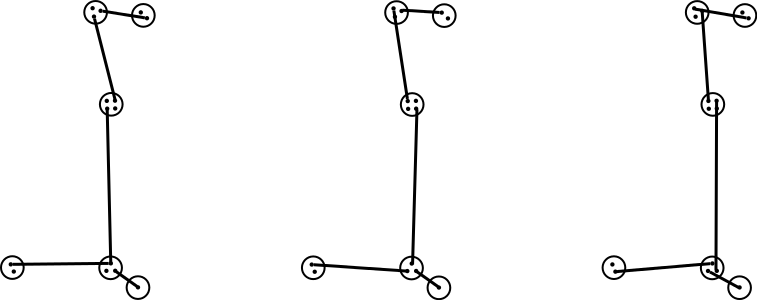}
  \caption{Embedding the MST on categories on the subjects.  This
    figure only shows 3 out of 15552 possible embeddings.}
  \label{fig:mst_real}
\end{figure}
Let $\calT$ be the set of all MSTs on the $N$ subjects.  Then, any member of $\calT$ can be represented as a union of a graph from $\calT_0$ and a graph from each of $\{\calT_k:~k=1,\dots,K\}$, and vice versa.  Thus,
$$ \calT = \left\{ \tau_0 \cup (\bigcup_{k=1}^K \tau_k): \tau_0\in \calT_0, \tau_k\in\calT_k, k=1, \dots, K \right\},$$
with $|\calT| =|\calT_0| \prod_{k=1}^K S_{m_k}$,
where $S_m=m^{m-2}$ is the number of spanning trees on $m$ points, by Cayley's formula.  
The test statistic based on averaging all MSTs on subjects is
\begin{equation}
R_\aMST \overset{\Delta}{=}  |\calT|^{-1} \sum_{\tau\in \calT}R_\tau,
\end{equation}
where $R_\tau$ is (\ref{sumedges}) with $G=\tau$.  One can give a computationally tractable form for $R_\aMST$ in terms of the cell counts of the contingency table and the set of possible MSTs on categories.
\begin{theorem}
  \label{thm:Rmst} The test statistic based on averaging over all MSTs on subjects is
\begin{align}
R_\aMST & = \sum_{k=1}^K \frac{2n_{ak}n_{bk}}{m_k} + |\calT_0|^{-1} \sum_{\tau_0^*\in\calT_0^*}
\prod_{k=1}^K m_k^{|\mathcal{E}_k^{\tau_0^*}|}
\sum_{(u,v) \in \tau_0^*}\frac{n_{au} n_{bv} + n_{av}
n_{bu}}{m_u m_v}. \label{eq:Rmst}
\end{align}
\end{theorem}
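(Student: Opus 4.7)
The plan is to exploit the bijection $\calT = \{\tau_0 \cup \bigcup_{k=1}^K \tau_k : \tau_0\in \calT_0,\ \tau_k\in\calT_k\}$ already established in the excerpt. Every edge of a $\tau \in \calT$ is either a cross-category edge (contributed by $\tau_0$) or a within-category edge (contributed by some $\tau_k$), so the test statistic is additive: $R_\tau = R_{\tau_0} + \sum_{k=1}^K R_{\tau_k}$. Combined with the product formula $|\calT| = |\calT_0| \prod_k S_{m_k}$, summing $R_\tau$ over $\calT$ and dividing by $|\calT|$ cleanly splits into two averages,
\[
R_\aMST \;=\; \sum_{k=1}^K S_{m_k}^{-1} \sum_{\tau_k\in\calT_k} R_{\tau_k} \;+\; |\calT_0|^{-1} \sum_{\tau_0\in\calT_0} R_{\tau_0},
\]
and the goal reduces to evaluating these two terms separately.

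For the within-category term I will observe that $\calT_k$ is exactly the set of all spanning trees on the complete graph $K_{m_k}$, because all pairwise distances in $\calC_k$ are zero. By the symmetry of a uniformly random spanning tree of $K_{m_k}$, any fixed pair $\{i,j\}\subset \calC_k$ lies in a random $\tau_k$ with probability $(m_k-1)/\binom{m_k}{2}=2/m_k$. Linearity of summation then gives $S_{m_k}^{-1} \sum_{\tau_k} R_{\tau_k} = (2/m_k)\sum_{i<j\in\calC_k} I_{g_i\neq g_j} = 2 n_{ak} n_{bk}/m_k$, which after summing over $k$ is the first piece of the claimed formula.

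For the cross-category term I will partition $\calT_0$ by the underlying unembedded tree $\tau_0^*\in\calT_0^*$ on categories. For fixed $\tau_0^*$, an embedding is the data of a representative in $\calC_k$ at each of the $d_k := |\calE_k^{\tau_0^*}|$ edge-endpoints incident to category $k$, chosen independently; this is what produces the count $\prod_k m_k^{d_k}$. Because the choices are independent, the number of embeddings in which a particular edge $(u,v)\in\tau_0^*$ ends up cross-group is $(n_{au}n_{bv}+n_{av}n_{bu}) \cdot m_u^{d_u-1} m_v^{d_v-1}\prod_{k\notin\{u,v\}} m_k^{d_k}$. Summing over $(u,v)\in\tau_0^*$, then over $\tau_0^*$, and dividing by $|\calT_0|$ gives the second piece. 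The main obstacle is the bookkeeping in this last step: one must verify that even when several edges of $\tau_0^*$ share a category endpoint, the endpoint choices are still independent across edges, so that the per-edge fraction collapses exactly to $(n_{au}n_{bv}+n_{av}n_{bu})/(m_u m_v)$. Once this combinatorial point is settled, combining the two pieces yields the identity in Theorem \ref{thm:Rmst}.
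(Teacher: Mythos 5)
Your proposal is correct and follows essentially the same route as the paper's proof: the same decomposition $R_\tau = R_{\tau_0} + \sum_k R_{\tau_k}$, the same symmetry argument giving each within-category edge probability $2/m_k$ (the paper phrases this as each edge lying in $2S_{m_k}/m_k$ of the $S_{m_k}$ trees), and the same independence-of-endpoint-choices count $\prod_k m_k^{|\mathcal{E}_k^{\tau_0^*}|}/(m_u m_v)$ for the cross-category term. The ``bookkeeping obstacle'' you flag is already resolved by your own definition of an embedding (one independently chosen representative per edge-endpoint), which is exactly the paper's definition, so there is no remaining gap.
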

The proof for Theorem \ref{thm:Rmst} is in Supplementary material \ref{sec:proof-theorem}.

 The statistic $R_\aMST$ has a much simpler form if there is a unique MST on categories, or if the total number of subjects in each category is the same.
\begin{corollary}\label{thm:Rmst_sp1}
  When $|\calT_0^*|=1$, then
\begin{equation}
  \label{eq:Rmst_sp1} R_\aMST = \sum_{k=1}^K
\frac{2n_{ak}n_{bk}}{m_k} + \sum_{(u,v) \in \tau_0^*}
\frac{n_{au} n_{bv} + n_{av} n_{bu}}{m_u m_v},
\end{equation} where $\tau_0^*$ is the unique MST on categories.
\end{corollary}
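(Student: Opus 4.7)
The plan is to obtain Corollary~\ref{thm:Rmst_sp1} as an immediate specialization of Theorem~\ref{thm:Rmst}, so the proof really amounts to tracking how the normalizing constant collapses when the set $\calT_0^*$ is a singleton. No new combinatorial identity is needed; the main task is bookkeeping.

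First, I would invoke Theorem~\ref{thm:Rmst} and write the general expression
\[
R_\aMST = \sum_{k=1}^K \frac{2n_{ak}n_{bk}}{m_k} + |\calT_0|^{-1} \sum_{\tau_0^*\in\calT_0^*} \prod_{k=1}^K m_k^{|\mathcal{E}_k^{\tau_0^*}|} \sum_{(u,v)\in\tau_0^*} \frac{n_{au}n_{bv}+n_{av}n_{bu}}{m_u m_v}.
\]
Next, using the hypothesis $|\calT_0^*|=1$, I would let $\tau_0^*$ denote the unique MST on categories. Then formula (\ref{eq:calT0}) reduces to a single term, giving
\[
|\calT_0| = \prod_{k=1}^K m_k^{|\mathcal{E}_k^{\tau_0^*}|}.
\]
Substituting this back into the expression above, the factor $|\calT_0|^{-1}$ exactly cancels the product $\prod_{k=1}^K m_k^{|\mathcal{E}_k^{\tau_0^*}|}$ appearing inside the (now single-term) outer sum, leaving precisely the formula claimed in (\ref{eq:Rmst_sp1}).

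There is essentially no difficulty here; the only thing worth being careful about is making sure that the embedding counting from (\ref{eq:calT0}) is being applied to the same unique $\tau_0^*$ that indexes the edge sum, so that the cancellation is legitimate termwise. Once that observation is made, the identity is immediate and the corollary follows.
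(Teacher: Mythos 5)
Your proof is correct and is exactly the intended argument: the paper states this corollary without proof precisely because it follows immediately from Theorem \ref{thm:Rmst} via the cancellation you describe, since \eqref{eq:calT0} reduces to $|\calT_0| = \prod_{k=1}^K m_k^{|\mathcal{E}_k^{\tau_0^*}|}$ when $|\calT_0^*|=1$.
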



The form  (\ref{eq:Rmst_sp1}) of the statistic is intuitive.  For each category $k$, we call the term $2n_{ak}n_{bk}/m_k$ the \emph{mixing potential} of the category.  The mixing potential is maximized when the subjects in category $k$ are evenly divided between groups $a$ and $b$; it is minimized when the category contains subjects from only one group.  A mixing potential for each edge $(u,v)$ can also be defined as $(n_{au}n_{bv} + n_{av}n_{bu})/(m_um_v)$.  The edge-wise mixing potential is maximized when the edge connects a category containing only group $a$ subjects with a category containing only group $b$ subjects; it is minimized when both categories contain subjects only from one group.  Thus, mixing potentials over categories and over edges between categories measure the similarity between the two groups.  Corollary \ref{thm:Rmst_sp1} shows that, when the MST on categories is unique, the test statistic $R_\aMST$ reduces to the sum of mixing potentials over nodes and edges of the MST on categories.  The similarity information on the categories is explicitly incorporated into the test through the sum of mixing potentials over the edges between categories.  
In testing, (\ref{eq:Rmst}) and (\ref{eq:Rmst_sp1}) 
must be compared to their permutation distributions.  A generalized statistic proposed later in Section \ref{sec:general} is based directly on (\ref{eq:Rmst_sp1}).

\subsubsection{$R_\uMST$} \label{sec:RuMST}
Let $\Mstar$ denote the set of edges appearing in at least one MST on categories,
$$\Mstar = \{ (u,v)\in \tau_0^*: \tau_0^* \in \mathcal{T}_0^* \}.$$
Thus $\Mstar$ is the uMST with the categories as nodes.  When there is only one MST on categories, $\tau_0^*$, then $\Mstar=\tau_0^*$; when there are multiple MSTs on categories, which is common for categorical data, obtaining $\Mstar$ is not straightforward.  Computation of $\Mstar$ is discussed in Section \ref{sec:comp}.  One can state the analytic form of $R_\uMST$ given $\Mstar$.
\begin{theorem}
  The test statistic based on uMST is
  \begin{equation}
    \label{eq:RuMST}
    R_\uMST = \sum_{k=1}^K n_{ak}n_{bk} + \sum_{(u,v)\in \mathcal{M}_0^*} (n_{au}n_{bv} + n_{av}n_{bu}),
  \end{equation}
\end{theorem}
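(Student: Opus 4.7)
The plan is to characterize explicitly which subject-level edges appear in the union graph $\uMST = \bigcup_{\tau \in \calT} \tau$, and then tally those whose endpoints fall in different groups. Using the decomposition introduced in Section \ref{sec:RaMST}, namely $\tau = \tau_0 \cup \bigcup_{k=1}^K \tau_k$ with $\tau_k \in \calT_k$ and $\tau_0 \in \calT_0$ an embedding of some $\tau_0^* \in \calT_0^*$, I would handle intra-category and inter-category edges separately, then sum.

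For the intra-category part: because any two subjects $i,j \in \calC_k$ have distance zero, the single-edge graph $(i,j)$ extends to some spanning tree of $\calC_k$ and hence lies in some $\tau \in \calT$. Thus every within-category pair belongs to uMST, and the number of such pairs that cross groups in category $k$ is $n_{ak}n_{bk}$.

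For the inter-category part, I would show that a subject edge $(i,j)$ with $i \in \calC_u$, $j \in \calC_v$, $u \neq v$, appears in some $\tau \in \calT$ if and only if $(u,v) \in \Mstar$. The ``only if'' direction follows from the decomposition: any inter-category edge of $\tau$ sits in its $\tau_0$ component, which by construction is an embedding of some $\tau_0^* \in \calT_0^*$, so the corresponding category pair lies in $\Mstar$. For ``if,'' given $(u,v) \in \tau_0^*$ I can choose $i$ as the representative for $u$ and $j$ as the representative for $v$ at precisely that edge when forming the embedding, producing a $\tau_0 \in \calT_0$ (and then a $\tau \in \calT$) containing $(i,j)$. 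Consequently, for each $(u,v) \in \Mstar$ all $m_u m_v$ possible edges between $\calC_u$ and $\calC_v$ are in uMST, contributing $n_{au} n_{bv} + n_{av} n_{bu}$ cross-group edges.

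Adding the intra- and inter-category contributions yields (\ref{eq:RuMST}); there is no double-counting, since the two edge classes are disjoint. The only delicate point in the argument is the embedding-freedom step in the ``if'' direction, where one must note that the $m_u$ choices of representative for $u$ at the edge $(u,v)$ are independent of the choices made at other edges of $\tau_0^*$ incident to $u$, so the representative assignment $i \mapsto u$, $j \mapsto v$ at that single edge is always realizable. Beyond this, the proof is a careful inventory of the union and a routine count of cross-group pairs.
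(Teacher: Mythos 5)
Your argument is correct and is essentially the paper's own proof, made slightly more explicit: the paper likewise notes that all within-category pairs are connected in the uMST, that all $m_u m_v$ edges appear between any two categories adjacent in some MST on categories, and that no edges appear otherwise. The extra care you take with the per-edge freedom of representatives in the embedding is a valid (and welcome) elaboration, not a different route.
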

\begin{proof}
  Within each category, every pair of subjects is connected, which gives the first term of \eqref{eq:RuMST}.  If categories $u$ and $v$ are connected in any $\tau_0^* \in \mathcal{T}_0^*$, then each point in category $u$ is connected to every point in category $v$, giving the second term of \eqref{eq:RuMST}.  If categories $u$ and $v$ are not connected in any $\tau_0^* \in \mathcal{T}_0^*$, no edge will appear between categories $u$ and $v$ in uMST.
\end{proof}

\begin{remark}\label{remark:RuMST}
\emph{Both $R_\uMST$ and $R_\aMST$ are derived from sums of $I_{g_i \neq g_j}$ over edges of the uMST on subjects.  The main difference between them is that $R_\uMST$ treats all of the edges equally, while $R_\aMST$ assigns  each edge a weight proportional to the number of MSTs on subjects in which the edge appears. Comparing \eqref{eq:RuMST} to \eqref{eq:Rmst_sp1},} 
\emph{the denominators in \eqref{eq:Rmst_sp1} are omitted in \eqref{eq:RuMST}.   Each edge within category $k$ appears in $|\mathcal{T}|/(m_k/2)$ MSTs, while each edge between categories appears in $|\mathcal{T}|/(m_u m_v)$ MSTs. Therefore, in comparison with $R_\aMST$,  $R_\uMST$ puts relatively more weight on between-category edges than within-category edges.}  
\end{remark}

    \subsection{A Numerical Study} \label{sec:sim}
The power of the tests based on $R_\aMST$, $R_\uMST$, $R_\aMDP$ and $R_\uNNG$ was studied and compared to Pearson's Chi-square and deviance tests on simulated data sets.  In each simulation, 30 points were randomly sampled from different distributions -- $N(0,1)$ vs $N(1,1)$, $N(0,1)$ vs $N(0,4)$, $N(0,1)$ vs $N(1,4)$, and $U(0,5)$ vs $U(1,6)$.  The combined sample of 60 points was then discretized into 12 bins of equal width.  The value 12 was chosen so that the average number of data points per category was 5, mimicking the low cell count scenario. 
The bins were ranked by their start positions, and the distance between two categories was defined as the difference in their ranks.  The $p$-values for all tests were calculated through 1,000 permutation samples for each simulation run, and the power was obtained from 1,000 simulation runs.  Figure \ref{fig:sim} shows power versus type I error for each test and each simulation setting, and a table listing the power under 0.05 significance level.  
 In the plots, since Pearson's Chi-square and deviance tests gave similar results, only the results for the deviance test are shown.  The deviance test is denoted by ``LR'' since it is based on the log-likelihood ratio.

First, compare $R_\aMST$, $R_\aMDP$, and $R_\uNNG$.  $R_\aMST$ was always significantly more powerful than $R_\aMDP$, which in turn was always more powerful than $R_\uNNG$. This result is intuitive from the definition of the different graphs.  Since the MST must span the entire data set, $K-1$ out of its $N-1$ edges are forced to connect points between categories.  For MDP, if a category has even number of subjects, the subjects in that category would be paired amongst themselves; between-category edges are only possible for those categories having an odd number of subjects.  For uNNG, as long as a category has more than one subject, the subjects in that category would not be connected to subjects from other categories.  Therefore, tests based on MST make the most use of the similarity information among categories, while the test based on $R_\uNNG$ makes the least use of this information.  The simulation results show a positive correlation between using similarity information and the power of the test.

In simulations, $R_\uMST$ and $R_\aMST$ performed similarly under the scenarios that compared two Normal distributions, while $R_\uMST$ had very little power, even lower than $R_\aMDP$ and the deviance test, for the comparison of two Uniform distributions with different supports.  When comparing two Normal distributions, the similarity between two categories was closely related to the difference of the ranks of the categories.  That is, the further apart the ranks of the two categories, the less similar.  However, when comparing two Uniform distributions with different supports, only the ranks at the two ends are informative while the middle ranks are not. Since $R_\uMST$ puts more weight on between-category edges compared to $R_\aMST$, its power would be lower if the similarity measure among categories were not informative.  Note that of all the graph-based tests, only the test based on $R_\aMST$ consistently outperformed the deviance test.

    \begin{figure}[!htp]
      \centering
      \includegraphics[width=.49\textwidth]{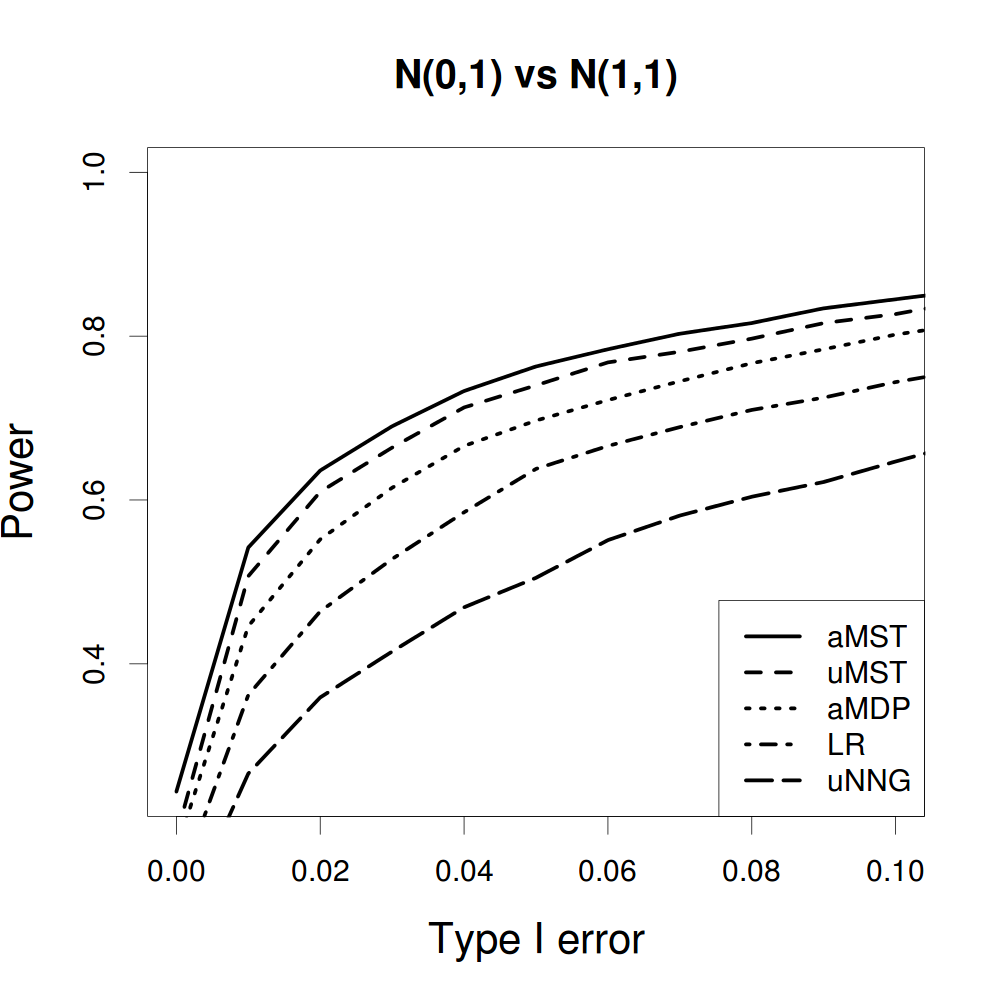}
      \includegraphics[width=.49\textwidth]{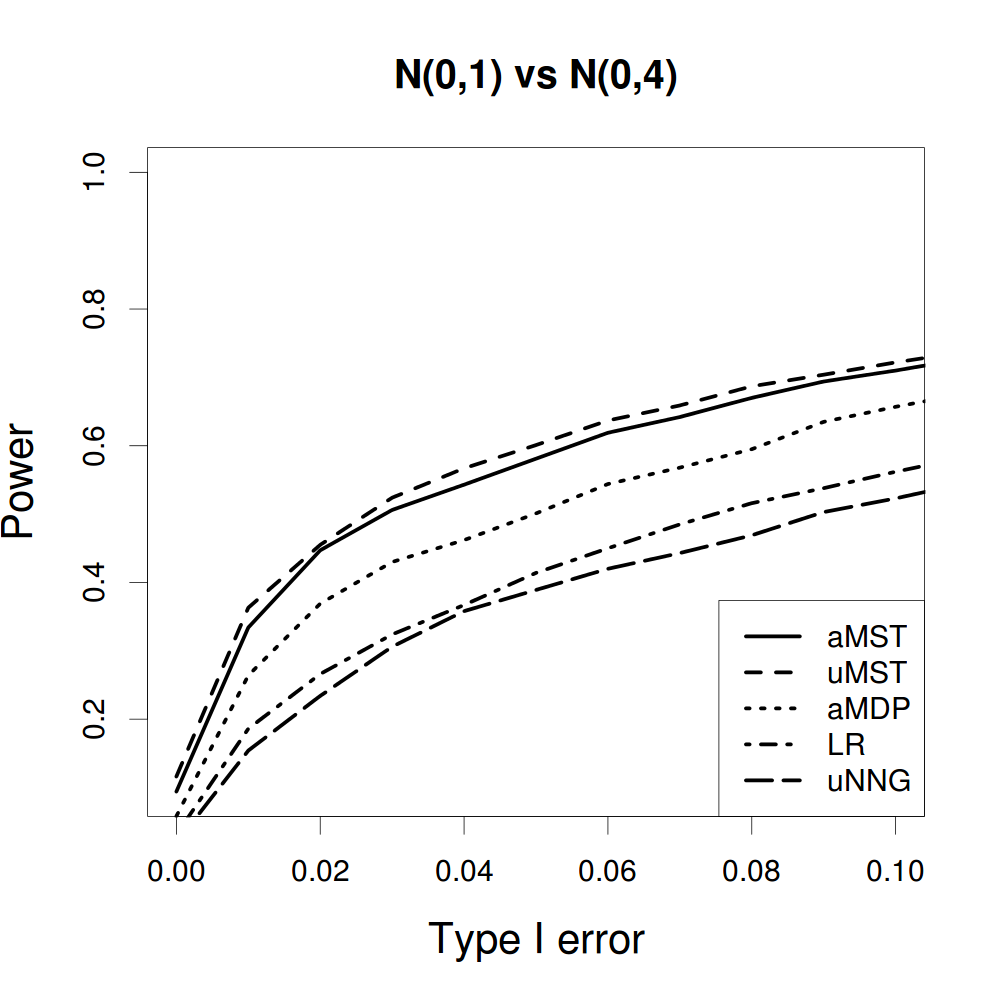} \\
      \includegraphics[width=.49\textwidth]{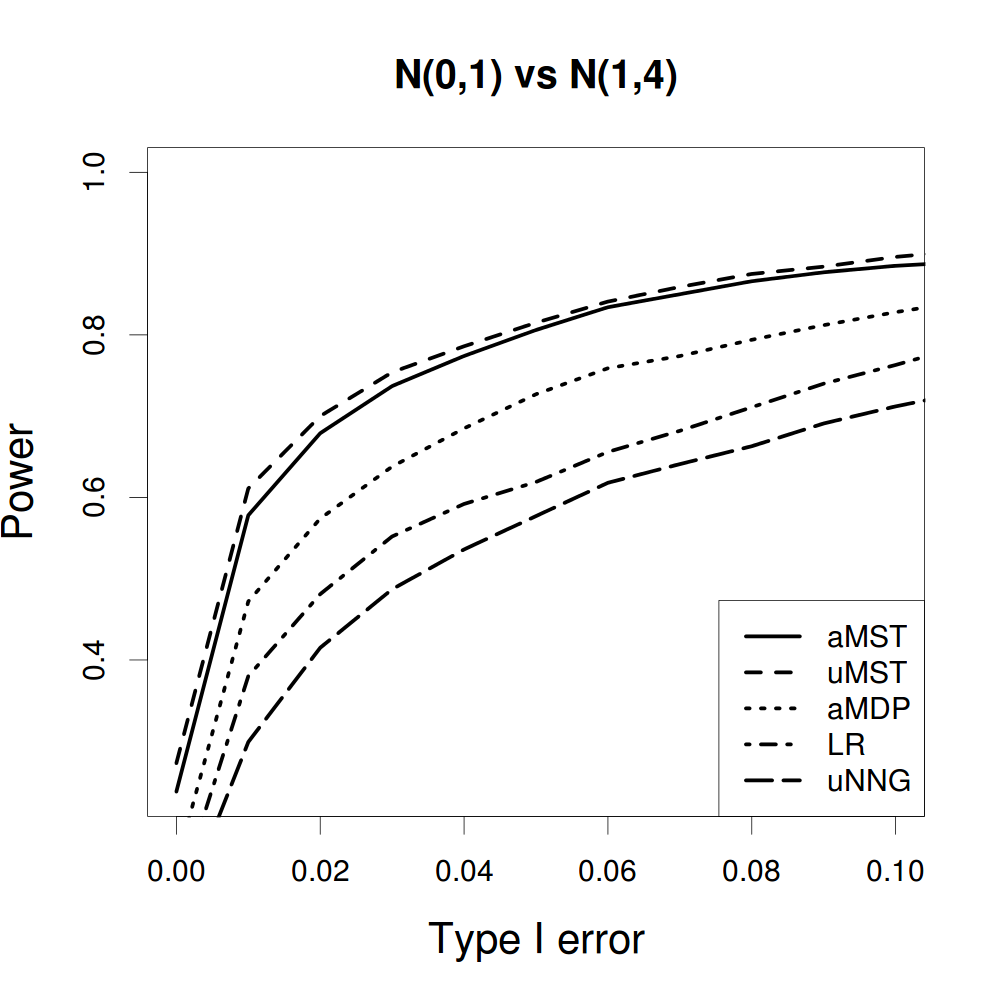}
      \includegraphics[width=.49\textwidth]{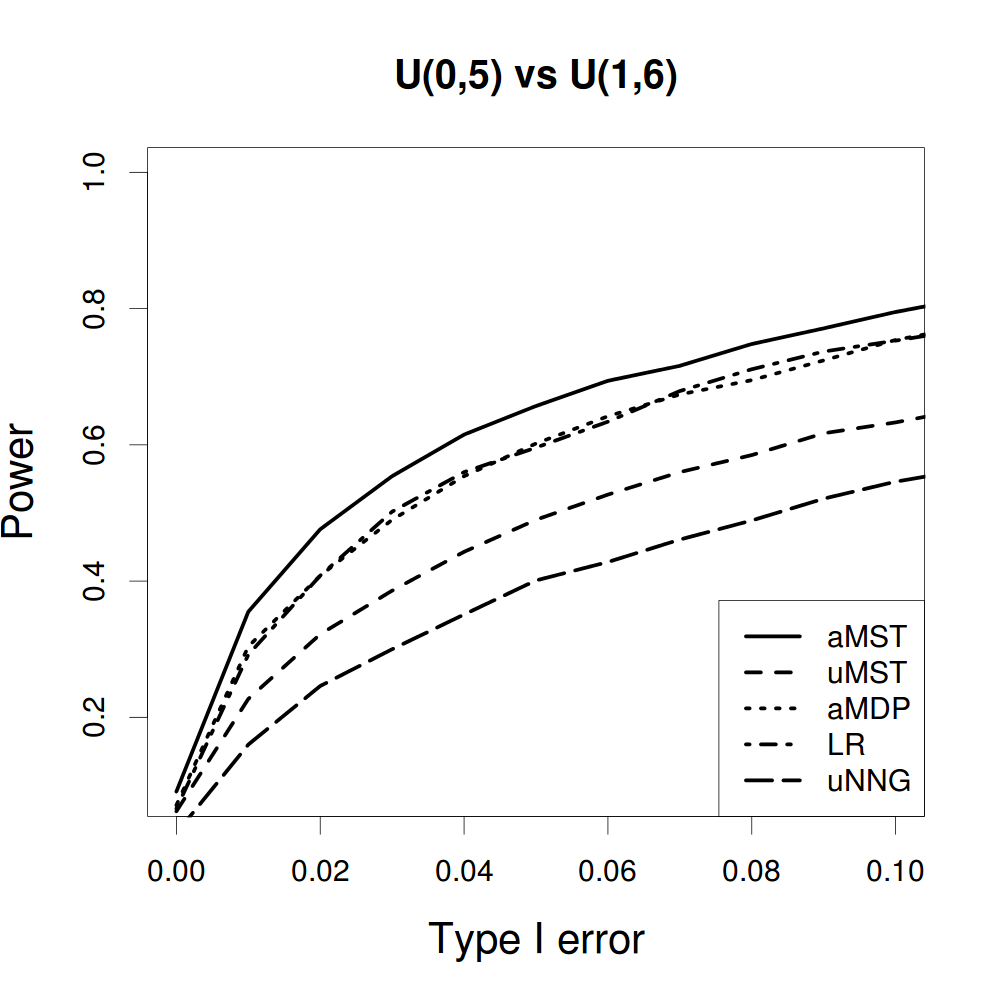} \\
\ \\
      \begin{tabular}{c|cccccc}
 \hline
 & aMST & uMST & aMDP & uNNG & LR & Pearson \\ \hline
N(0,1) vs N(1,1)  & 0.762 & 0.740 & 0.679 & 0.492 & 0.605 & 0.605 \\ \hline
N(0,1) vs N(0,4) & 0.558 & 0.585 & 0.482 & 0.382 & 0.394 & 0.396 \\ \hline
N(0,1) vs N(1,4) & 0.804 & 0.824 & 0.722 & 0.569 & 0.632 & 0.626 \\ \hline
U(0,5) vs U(1,6) & 0.665 & 0.486 & 0.607 & 0.383 & 0.600 & 0.552 \\ \hline
      \end{tabular}

      \caption{Power versus type I error for tests based on $R_\aMST$, $R_\uMST$, $R_\aMDP$, the likelihood ratio (deviance), and $R_\uNNG$ under different simulation settings.  The table lists the power under 0.05 significance level.}
      \label{fig:sim}
    \end{figure}


This simulation study is limited and only used ranked data.  We chose this study design for its interpretability.  Though simple, the results are informative and show the advantage of averaged MST over averaged MDP and uNNG for categorical data.  Also, averaged MST is better than uMST when the similarity measure used to construct the graph is not effective;  if the similarity measure is effective, the test based on uMST is comparable to, and sometimes better than, the test based on averaged MST.  Hence we focus on the tests based on $R_\aMST$ and $R_\uMST$.

    \subsection{Computational Issues} \label{sec:comp}
The analytic forms \eqref{eq:Rmst} and \eqref{eq:RuMST}, require enumeration of all MSTs on categories for $R_\aMST$; and enumeration of all edges in $\mathcal{M}_0^*$ for $R_\uMST$.  
Let $M = |\calT_0^*|$ be the number of MSTs on categories.
If the distance matrix between categories is continuous-valued, then usually $M=1$.  Even when the distance matrix is arithmetic, $M$ is often small enough to be manageable.  However, for problems that exhibit certain symmetries, enumeration of the set of all MSTs on categories is not computationally feasible.  For the haplotype association problem in Section \ref{sec:hap}, the number of MSTs on categories $M$ can be computed using the Matrix-Tree Theorem if we assume all categories are non-empty: $$ M =  2^{2^l-l-1} \prod_{i=2}^l \exp \left\{ \scriptsize{\left(  \begin{array}{c} l \\ i \end{array}\right) \log i }  \right\},$$
where $l$ is the haplotype length.  From the formula for $M$, it increases quickly as $l$ increases.  When the length of the haplotype is 6, a reasonably short length in genetic studies, there are only 64 possible categories while $M$ equals $1.66\times 10^{45}$.  One may argue that in this case, \eqref{eq:Rmst} may be further simplified using the symmetry over categories, so that enumeration of $|\calT_0^*|$ is not necessary.  This is true if all categories are non-empty, but if one or more of the categories are empty, the symmetry breaks and $M$ is still too large for enumeration.



Consider the listing of all edges in uMST on categories, $\Mstar$, which is required for $R_\uMST$.  This task can be completed in $\mathcal{O}(K^2)$ time through an algorithm proposed by \cite{eppstein1995representing}.  Details of the algorithm are in Appendix \ref{sec:comptime}, and its theoretical justification is completed by \cite{chen2013graph}.  The $\mathcal{O}(K^2)$ time is usually affordable since $K$ is no larger than the sample size.  Thus $R_\uMST$ is computationally feasible for any problem.  On the other hand, $R_\aMST$ requires the enumeration of all MSTs on categories, not just their edges, and thus adds $\mathcal{O}(M)$ computation time to the algorithm.  For the haplotype example, this makes $R_\aMST$ computationally infeasible.  In the next Section, we propose a statistic that is motivated by $R_\aMST$ but is computationally tractable for all problems.


\subsection{A Fast Method Generalized from $R_\aMST$} \label{sec:general}
Corollary \ref{thm:Rmst_sp1} gives a simple and intuitive form of $R_\aMST$ when there is a unique MST on categories.  In that special case, $R_\aMST$ is the sum of mixing potentials computed within each category and mixing potentials computed between categories that are connected by an edge of the MST $\tau_0^*$.  Evidence against the null increases if this sum of mixing potentials is small, as compared to random permutation.    In (\ref{eq:Rmst_sp1}), the MST $\tau_0^*$ serves as an enumeration of the pairs of categories that are highly similar.  There is nothing sacred about the choice of MST for this role.  The intuitive interpretation for (\ref{eq:Rmst_sp1}) remains if we replace $\tau_0^*$ by any other graph $ C_0$ that represents proximity between categories.

We assumed so far that a distance matrix on categories is used to represent the similarity between categories.  We now bypass the distance matrix and assume that similarity is directly represented by a graph $ C_0$ with the categories as nodes.  Our goal is to incorporate the proximity information encoded by the graph into the two group comparison.  We propose a statistic obtained by substituting $ C_0$ for $\tau_0^*$ in (\ref{eq:Rmst_sp1}):
\begin{equation}
  \label{eq:R_G0} \RC = \sum_{k=1}^K \frac{2n_{ak}n_{bk}}{m_k} +
\sum_{(u,v) \in  C_0}\frac{n_{au} n_{bv} + n_{av}
n_{bu}}{m_u m_v}.
\end{equation}
There is a similar interpretation as for $R_\aMST$.  Consider all $C_0$-spanning graphs that are graphs on subjects, where every pair of subjects are connected by a path if they are in the same category or they are in two categories that are connected by a path in $C_0$.  Hence,
minimum distance $ C_0$-spanning graphs connect subjects within categories by spanning trees, and connects exactly one pair of subjects between each pair of categories that have an edge in $ C_0$.  $\RC$ is the averaged sum (\ref{sumedges}) over all minimum distance $ C_0$-spanning graphs.

If $ C_0$ is given, computing $\RC$ requires $\mathcal{O}(K+| C_0|)$ time.  If $ C_0$ is not given, the choice of $ C_0$ can often be guided by domain knowledge.  In our examples, choices for $C_0$ include the uMST on categories that we denote by C-uMST (same as $\Mstar$), and the uNNG on categories that we denote by C-uNNG.   Since C-uMST and C-uNNG can both be computed in $\mathcal{O}(K^2)$ time, $R_\CuMST$ and $R_\CuNNG$ require only $O(K^2)$ computation time for any problem.

\section{Examples} \label{sec:examples}

The application of $R_\CuMST$, $R_\CuNNG$ and $R_\uMST$ are illustrated on several examples, both real and simulated.  In the simulated examples, their powers are compared to those of Chi-square tests.  The $p$-values for all tests were calculated through 1,000 permutation samples for each run, and the power calculated through 1,000 simulation runs.  

    \subsection{Preference Ranking} \label{sec:ranking} Consider comparing two groups of subjects on the ranking of four objects.  Let $\Xi$ be the set of all permutations of the set $\{1,2,3,4\}$.  Data were simulated as follows:  Subjects from group $a$ have no preference among the four objects, and their rankings were uniformly drawn from $\Xi$;  rankings of subjects from group $b$ were generated from the distribution
\begin{equation}
    P_{\theta}(\zeta) = \frac{1}{\psi(\theta)}\exp\{-\theta d(\zeta,\zeta_0)\}, \quad \zeta, \zeta_0 \in \Xi, ~\theta \in \mathbb{R},
\end{equation}
where $d(\cdot,\cdot)$ is a distance function and $\psi$ a normalizing constant.  This probability model, first considered by \cite{mallows1957non} with Kendall's or Spearman's distance, favors rankings that are similar to a modal ranking $\zeta_0$ if $\theta>0$.  See \cite{diaconis1988group} for more discussion.  The larger the value of $\theta$, the more clustering there should be in group $b$ around the mode $\zeta_0$.  We experimented with both Kendall's and Spearman's distance and various values for $\theta$.  We assumed that the true distance function used to generate the data is either known and used to construct the graph, or unknown, in which case an incorrect distance is used.

Figure \ref{fig:permgraph} shows C-uMST and C-uNNG formed on a typical data set generated under $\theta=5$ with $n_a = n_b = 20$.  Spearman's distance was used in both the generating model and for constructing the graph.  In this instance, C-uMST contained all edges in C-uNNG with three extra edges, shown in thinner lines.  The reason this happened is that no category was as close to category ``3241'' as category ``3142'', and no category is as close to category ``3142'' as category ``3241''.  For MST on categories, more edges are needed to form a spanning tree.  It is clear that in this case, there are three MSTs on categories, each one obtained by adding one of the three thinner edges to the C-uNNG.  In most simulation runs, C-uMST and C-uNNG were the same, while in those runs where they differed, C-uNNG was always a subset of C-uMST.

Figure \ref{fig:perm} shows the power versus type I error, for $\theta=5$ and $ n_a=n_b=20$, under different combinations of using Kendall's or Spearman's distance for the generating model and for constructing the graph, as well as the power under 0.05 significance level.  We see that even when a wrong distance was used, the graph-based tests still had significantly higher power than the Chi-square tests.  For this simulation setting, $R_\uMST$ was the most powerful among the three graph-based tests; $R_\CuMST$ and $R_\CuNNG$ performed similarly with $R_\CuMST$ a little better in all cases, implying that the extra edges in C-uMST gave additional useful information.

\begin{figure} \centering
  \includegraphics[width=.35\textwidth]{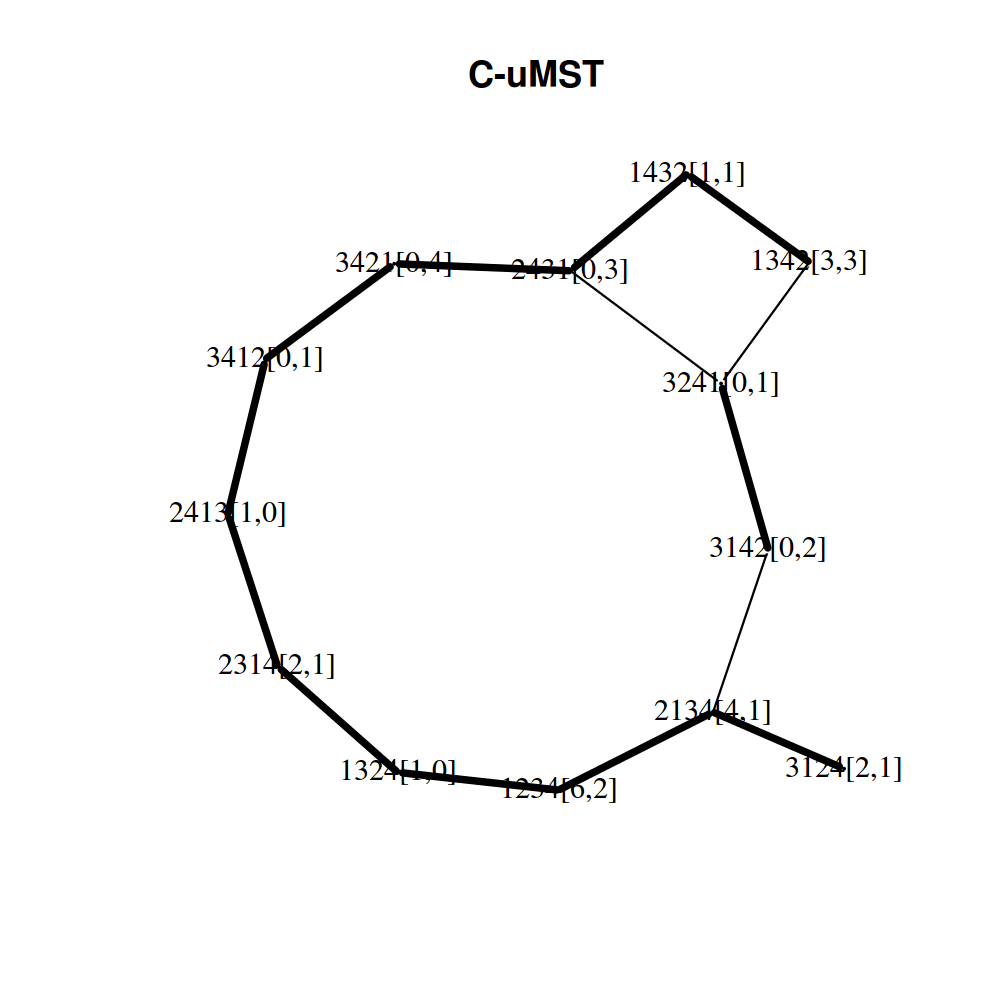} \quad
  \includegraphics[width=.35\textwidth]{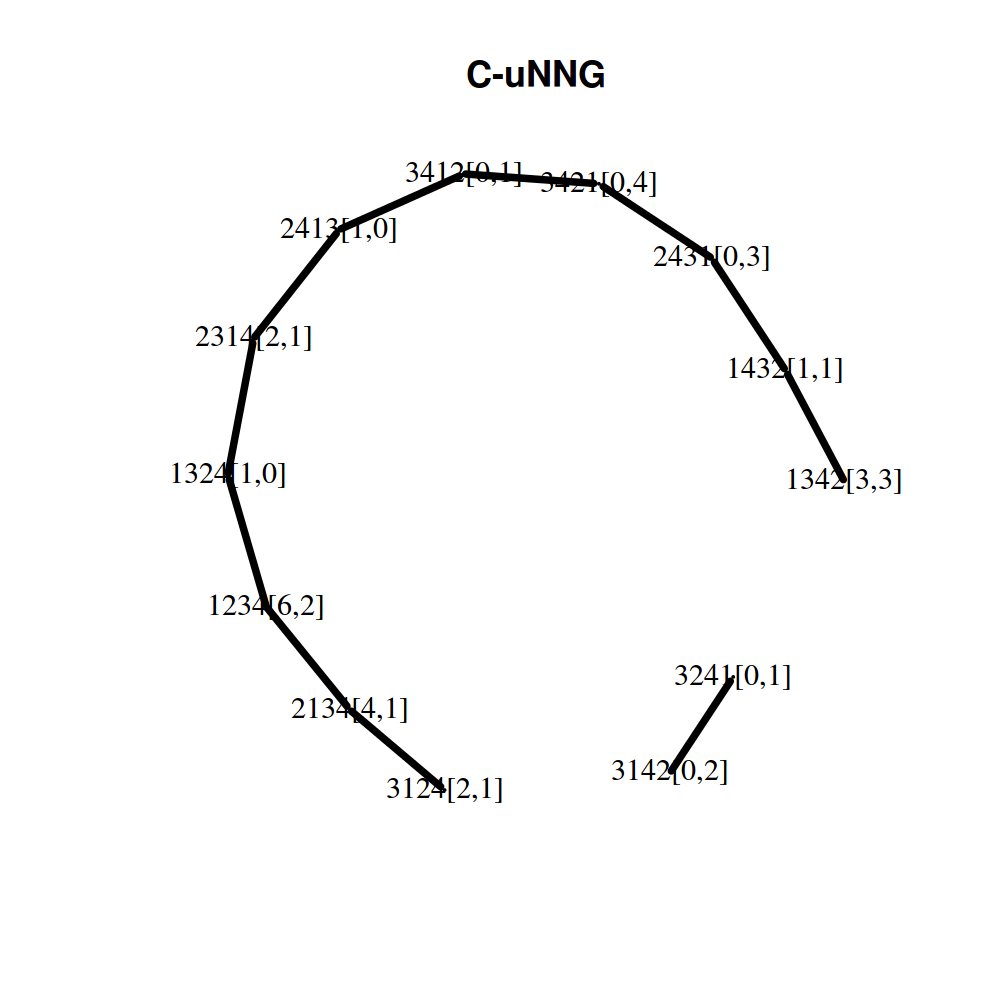}
  \caption{C-uMST and C-uNNG constructed on a typical data set generated under parameters $\zeta_0=1234$ and $\theta=5$, with $n_a=n_b=20$.  Spearman's distance was used in both the generating model and for constructing the graph.  Each node is labeled by the ranking it represents, followed by the number of subjects from groups $a$ and $b$  with that ranking in parentheses.}
  \label{fig:permgraph}
\end{figure}

\begin{figure}
  \centering
  \includegraphics[width=.45\textwidth]{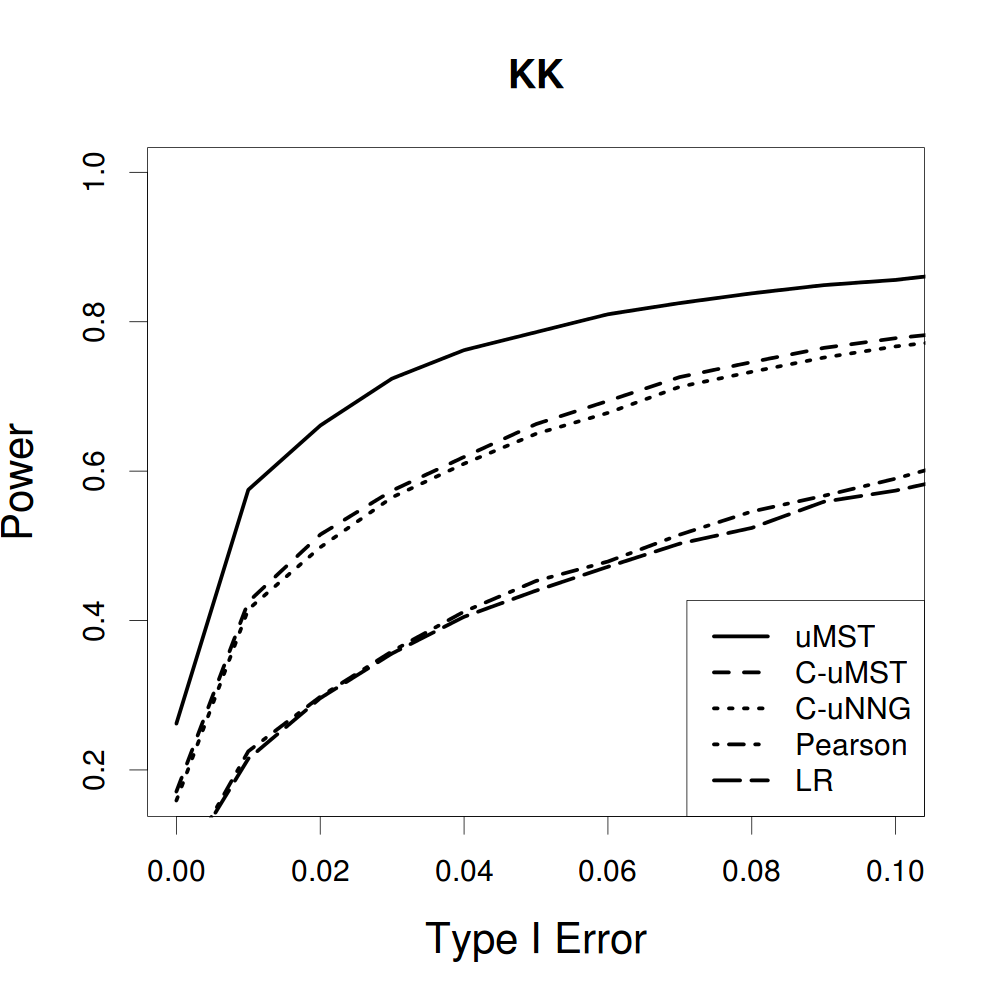}
  \includegraphics[width=.45\textwidth]{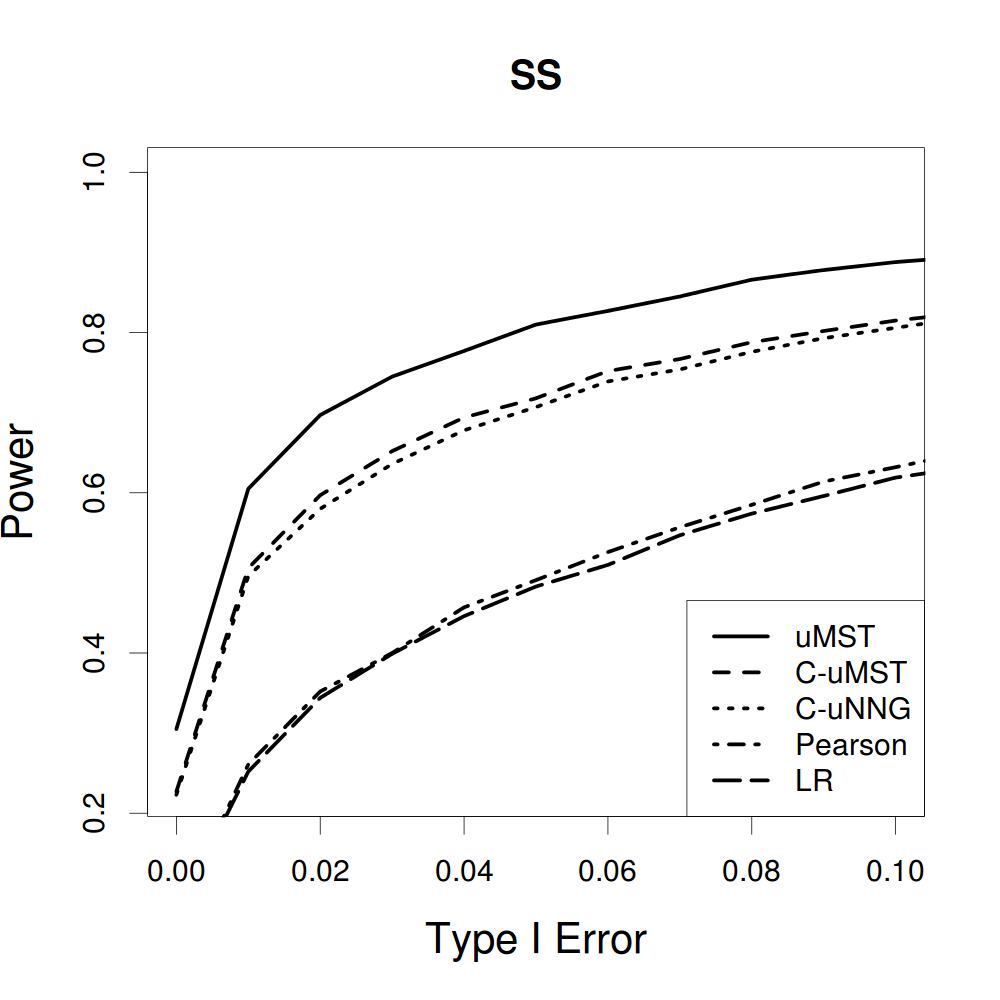} \\
  \includegraphics[width=.45\textwidth]{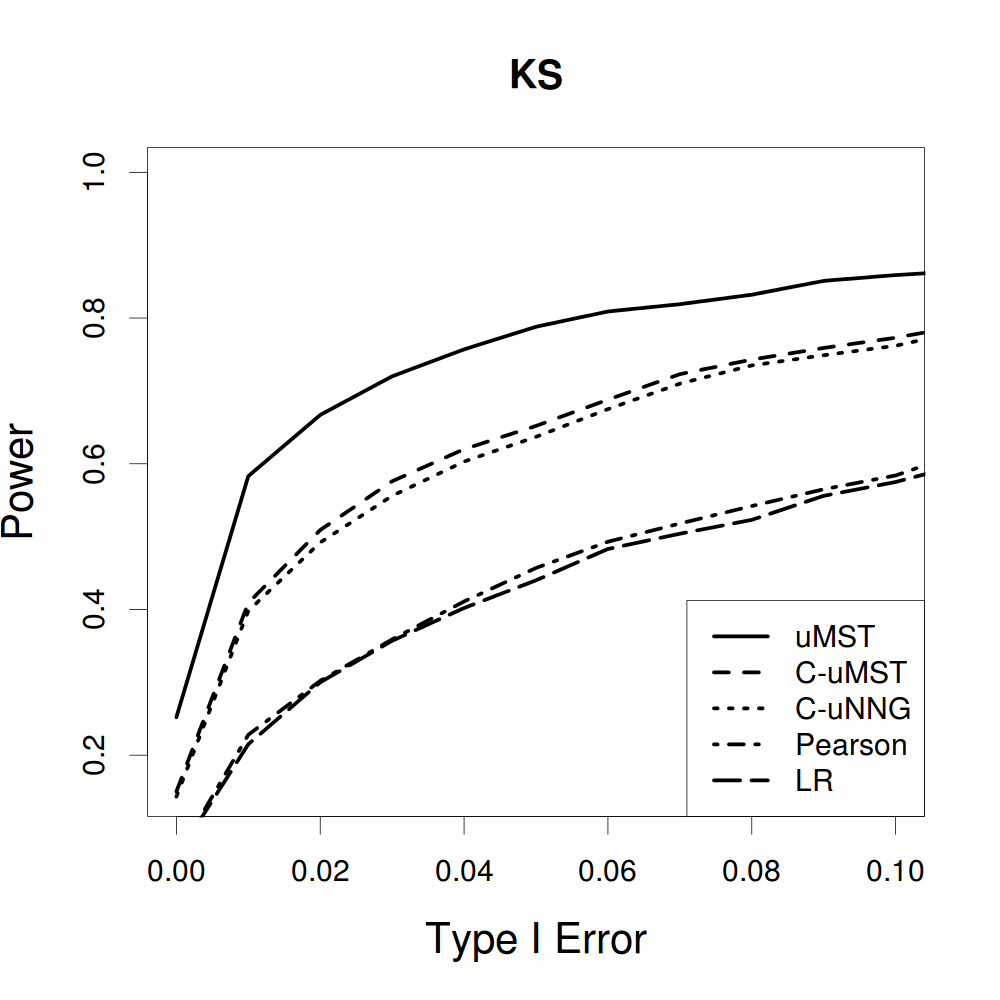}
  \includegraphics[width=.45\textwidth]{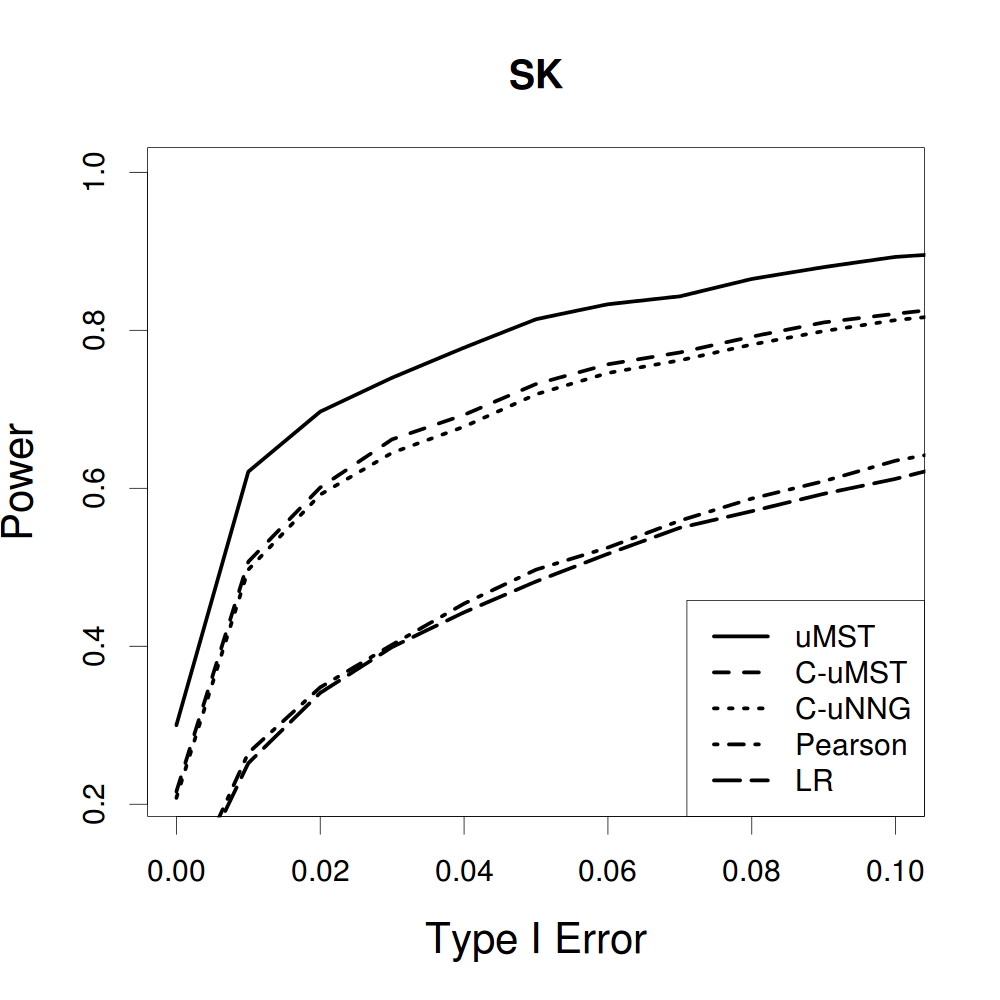} \\
\ \\
     \begin{tabular}{l|ccccc}
\hline
 & uMST & C-uMST & C-uNNG & Pearson & LR \\ \hline
KK & 0.784 & 0.660 & 0.648 & 0.450 & 0.439 \\ \hline
KS & 0.784 & 0.649 & 0.631 & 0.455 & 0.437 \\ \hline
SS & 0.807 & 0.715 & 0.703 & 0.485 & 0.480 \\ \hline
SK & 0.811 & 0.729 & 0.715 & 0.494 & 0.481 \\ \hline
      \end{tabular}

  \caption{Power versus type I error for the five tests in the preference ranking example with $\theta=5$ and $n_a=n_b=20$.   A distance measure, Kendall's (K)  or Spearman's (S) distance, was used for the generating model and for constructing the graph.  The first letter denotes the distance used in the generating model, and the second letter denotes the distance used in constructing the graph.  The table lists the powers under 0.05 significance level.}
  \label{fig:perm}
\end{figure}


    \subsection{Haplotype Association}  \label{sec:hap} 
We consider a disease model where the probability for disease depends on the haplotype at four single nucleotide polymorphisms (SNP).  We encode the allele at each SNP as 0 or 1, and so the haplotype can be represented as a binary string.  We assume that the disease probability depends on the number of positions at which the subject's haplotype agrees with a target haplotype:
$$P(\hbox{Disease}) =  0.3 + 0.1 \times (\hbox{Number of positions in agreement}).$$
Thus, the probability of disease can take values 0.3 0.4, 0.5, 0.6 or 0.7 depending on whether there are 0, 1, 2, 3 or 4 positions in agreement.  To make the problem harder, we assume that seven non-informative SNPs are analyzed together with the four informative SNPs, and that which and how many of the 11 SNPs are informative is unknown in the analysis.  Thus the data actually consists of haplotypes of length 11.  There are $2^{11} = 2,048$ possible categories.  In each simulation, 1,000 haplotypes with length 11 were generated uniformly from all possible haplotypes.  Each subject with a given haplotype was signed as ``patient'' or ``normal'' according to the disease model.  Since only 1,000 subjects were simulated in each run, not all of the 2,048 categories were represented.  The number of non-empty categories in each run ranged from 755 to 823, with an average of 791 in the 1000 simulation runs.  The Hamming distance was used to construct the graph.  Figure \ref{fig:haplotype} shows the power versus type I error plots for the five tests.  It is clear that, by incorporating the information in the graph, tests based on $R_\uMST, R_\CuMST$, and $R_\CuNNG$ all have much higher power than the Pearson's Chi-square and deviance tests.  Among the three graph-based tests, the one based on $R_\uMST$ works a little better than the ones based on $R_\CuMST$ and $R_\CuNNG$.

\begin{figure} \centering
  \includegraphics[width=.5\textwidth]{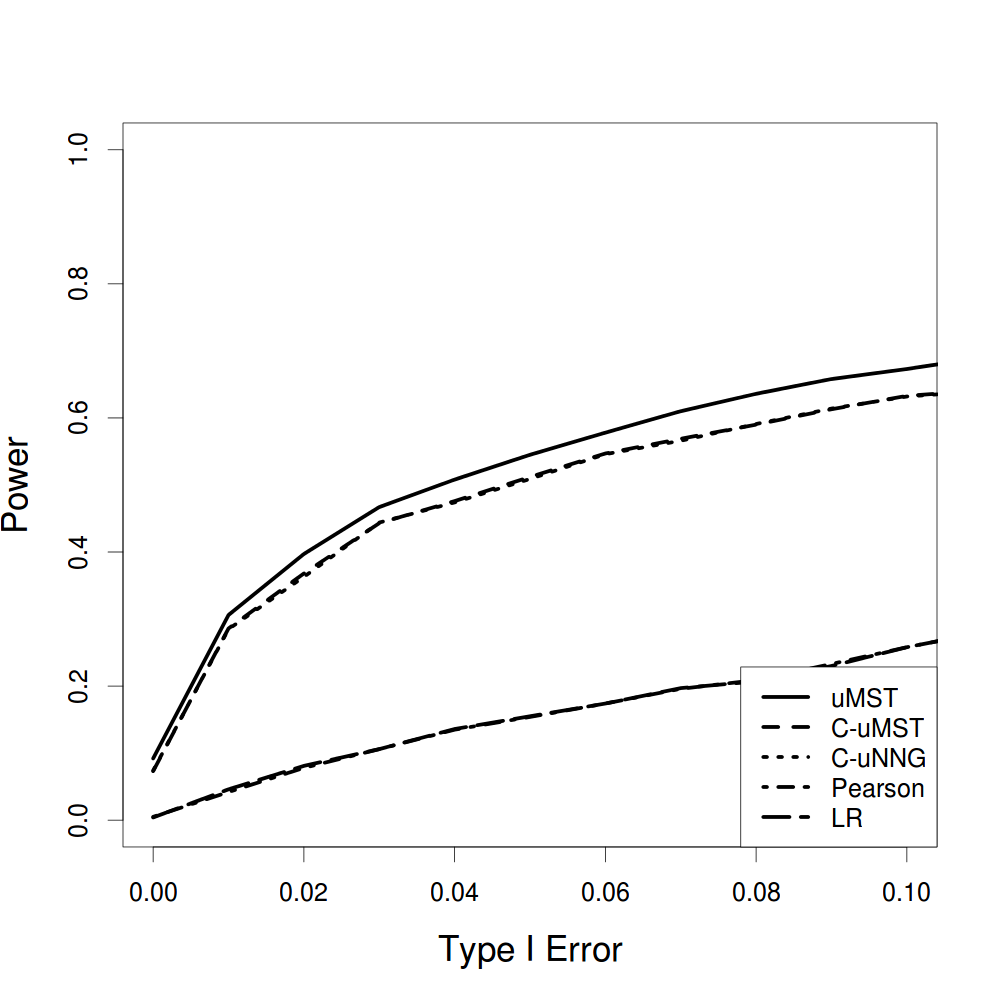}
  \caption{The power versus type I error plots for the five tests for the haplotype example.  The length of the haplotype is 11, with only 4 positions informative.}
  \label{fig:haplotype}
\end{figure}

    \subsection{Binary Clinical Features}  This example comes from \citet{anderson1972statistical} and \citet{nettleton2001testing}.  Data on the presence or absence of 17 clinical features of the eye ailment Keratoconjunctivitis Sicca (KCS) are given for two groups of patients.  A question asked by Nettleton and Banerjee was whether the two groups of patients share a common distribution with respect to these clinical features.  The sizes of the groups are 40 and 24. It turned out that only two subjects had the same outcome for the 17 clinical features, so there are in total 63 distinct categories.  Hamming distance was used to construct the graph, and $p$-values were calculated through $10,000$ permutation samples and are shown in Table \ref{tab:kcs}.  Nettleton and Banerjee's method is based on the uNNG on subjects.  As discussed before and confirmed by simulation studies in Section \ref{sec:sim}, the uNNG on subjects has lower power than MST based tests when many categories have more than one subject.  This is not a problem in this data set because only one category has more than one subject.  We see that $R_\uMST$, $R_\CuMST$, and $R_\CuNNG$ all detected the difference between the two groups of patients, while the Chi-square tests did not.

    \begin{table}[!htp]
      \centering
      \begin{tabular}{|c|c|c|c|c|c|}
        \hline
        $R_\uMST$ & $R_\CuMST$ & $R_\CuNNG$ & Nettleton and Banerjee's & Pearson & LR \\ \hline
        0.0011 & 0.0010 & 0.0006 & 0.0007 & 0.5200 & 0.5200 \\ \hline
      \end{tabular}
      \caption{$P$-values for the KCS data set.}
      \label{tab:kcs}
    \end{table}

\section{Permutation Distributions of the Test Statistics} \label{sec:null}
Based on the results in Sections \ref{sec:sim}-\ref{sec:general}, we focus now on $R_\CuMST$ and $R_\uMST$.  We consider the permutation distributions of these statistics in their generalized forms.  That is, we consider $R_{C_0}$ and $T_{C_0}$, the latter defined as
\begin{equation}
  \label{eq:TC0}
  T_{ C_0} = \sum_u n_{au}n_{bu} + \sum_{(u,v)\in  C_0} (n_{au}n_{bv} + n_{bu}n_{av})
\end{equation}
$T_\CuMST$ is equivalent to $R_\uMST$.  The permutation distributions of $R_\CuMST$ and $R_\uMST$ follow immediately.

We use two quantities to characterize the permutation distributions:
\begin{eqnarray}
\lambda &:=& \max_u|\calE_u^{C_0}|, ~~\hbox{the maximum node degree in $ C_0$.}\\
\beta &:=& \max_u m_u,~~ \hbox{the maximum total count for a category.}
\end{eqnarray}
By permutation distribution, we are referring to the distribution of the statistic under random uniform permutation of the group labels.  This is used as the null distribution to assess statistical significance.  We use $\bfPP$, $\bfEP$, and $\VP$ to denote the probability, expectation, and variance under the permutation null.

\subsection{$\RC$}
\label{sec:rc0}


\begin{lemma}\label{thm:EPVP}
The mean and variance of $R_{ C_0}$ under the permutation null are
  \begin{align}
    \label{eq:EP_RG0}
    \bfEP[R_{ C_0}] &= (N-K+|  C_0|)2p_1, \\
    \VP[R_{ C_0}] & = 4(p_1-p_2)(N-K + 2|  C_0| + \sum_u |\calE_u|^2/(4 m_u) - \sum_u |\calE_u|/m_u ) \label{eq:VP_RG0}\\
 & \quad \quad + (6p_2-4p_1)(K-\sum_u 1/m_u) + p_2\sum_{(u,v)\in   C_0} 1/(m_um_v) \nonumber \\
& \quad \quad + (N-K+|  C_0|)^2(p_2-4p_1^2), \nonumber
  \end{align}
where
\begin{equation}
  \label{eq:p1p2}
  p_1 = \frac{n_a n_b}{N(N-1)}, \quad p_2 =  \frac{4n_a (n_a -1) n_b (n_b -1)}{N(N-1)(N-2)(N-3)}.
\end{equation}

\end{lemma}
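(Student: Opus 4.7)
The plan is to express $R_{C_0}$ as a weighted sum $\sum_{e \in E^*} w_e J_e$ of indicators $J_e = I_{g_i \neq g_j}$ over a set of ``effective'' edges on the $N$ subjects. Concretely, following the identities used in the proof of Theorem 1, the within-category term $2n_{au}n_{bu}/m_u$ equals $(2/m_u)\sum_{i<j,\,i,j\in\calC_u} I_{g_i\neq g_j}$, and each between-category term equals $(1/(m_um_v))\sum_{i\in\calC_u,j\in\calC_v} I_{g_i\neq g_j}$. So $E^*$ consists of every within-category pair (weight $w_e = 2/m_u$) together with every between-category pair whose categories form an edge of $C_0$ (weight $w_e = 1/(m_um_v)$). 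This reduces the problem to computing the first two moments of a weighted sum of pairwise indicators under the permutation null.

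Next I would record the standard permutation moments obtained by direct counting of labeled assignments: $\bfEP[J_e] = 2p_1$ for every edge $e$; $\bfEP[J_eJ_f] = p_1$ for distinct $e,f$ sharing a vertex (a three-subject calculation); and $\bfEP[J_eJ_f] = p_2$ for vertex-disjoint $e,f$ (a four-subject calculation). The mean formula follows at once, since
\[
\bfEP[R_{C_0}] = 2p_1\sum_{e\in E^*} w_e,
\]
and $\sum_e w_e$ separates as $\sum_k \binom{m_k}{2}(2/m_k) + |C_0| = (N-K) + |C_0|$, matching \eqref{eq:EP_RG0}.

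For the variance I would split the double sum $\sum_{e,f} w_ew_f\bfEP[J_eJ_f]$ into the cases $e=f$, $e\neq f$ sharing a vertex, and $e\neq f$ vertex-disjoint, and rewrite it as
\[
\bfEP[R_{C_0}^2] = (2p_1 - p_2)\sum_e w_e^2 + (p_1-p_2)\!\!\!\sum_{\substack{e\neq f\\e\cap f\neq\emptyset}}\!\!\!w_ew_f + p_2\Bigl(\sum_e w_e\Bigr)^{\!2},
\]
and then subtract $\bigl(2p_1\sum_e w_e\bigr)^2$ to isolate the last term $(p_2-4p_1^2)(N-K+|C_0|)^2$. For the first summand, the within- and between-category contributions give $\sum_e w_e^2 = 2K - 2\sum_u 1/m_u + \sum_{(u,v)\in C_0} 1/(m_um_v)$. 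For the vertex-sharing summand, I would use
\[
\sum_{\substack{e\neq f\\e\cap f\neq\emptyset}} w_ew_f = \sum_{i=1}^N D_i^2 - 2\sum_e w_e^2,\qquad D_i := \sum_{e\ni i} w_e,
\]
observing that $D_i$ depends only on the category $k(i)$ of subject $i$ and equals $(2(m_{k(i)}-1) + |\calE_{k(i)}^{C_0}|)/m_{k(i)}$, so $\sum_i D_i^2 = \sum_k m_k D_k^2$.

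The main obstacle, and the bulk of the remaining work, is the bookkeeping in the final consolidation. After expanding $\sum_k m_k D_k^2$, the handshaking identity $\sum_u |\calE_u^{C_0}| = 2|C_0|$ produces cross terms in $N$, $K$, $|C_0|$, $\sum_u 1/m_u$, $\sum_u |\calE_u^{C_0}|/m_u$, and $\sum_u |\calE_u^{C_0}|^2/m_u$. These must be combined with the three coefficient blocks $(2p_1-p_2)$, $(p_1-p_2)$, and $p_2$ so that the pieces involving $K-\sum_u 1/m_u$ consolidate with coefficient $6p_2-4p_1$, the piece $\sum_{(u,v)\in C_0} 1/(m_um_v)$ consolidates with coefficient $p_2$, and the remaining graph-structure terms repackage into $4(p_1-p_2)\bigl(N-K+2|C_0| + \tfrac14\sum_u |\calE_u|^2/m_u - \sum_u |\calE_u|/m_u\bigr)$. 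No individual step is deep; the verification is essentially an exercise in matching coefficients once the moment decomposition is in place.
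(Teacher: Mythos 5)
Your proposal is correct, and I verified that the final consolidation does close: with $X = N-K+2|C_0|+\tfrac14\sum_u|\calE_u|^2/m_u-\sum_u|\calE_u|/m_u$, $Y=K-\sum_u 1/m_u$, $Z=\sum_{(u,v)\in C_0}1/(m_um_v)$, your quantities are $\sum_e w_e^2 = 2Y+Z$ and $\sum_i D_i^2 - 2\sum_e w_e^2 = 4X-8Y-2Z$, so $(2p_1-p_2)(2Y+Z)+(p_1-p_2)(4X-8Y-2Z)$ regroups exactly into $4(p_1-p_2)X+(6p_2-4p_1)Y+p_2Z$, matching \eqref{eq:VP_RG0}. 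The underlying engine is the same as the paper's: both arguments reduce everything to the four-case table for $\bfPP(g_i\neq g_j,\,g_k\neq g_l)$ ($0$, $2p_1$, $p_1$, $p_2$ according to how the two subject pairs intersect). Where you differ is in the bookkeeping: the paper splits $R_{C_0}=R_A+R_B$ and separately expands $\bfEP[R_A^2]$, $\bfEP[R_B^2]$, $\bfEP[R_AR_B]$ with an explicit case analysis at the category level (same category, two categories joined by an edge, two edges sharing a category, all distinct), whereas you fold everything into one weighted graph on subjects and use the generic identity $\sum_{e\neq f,\,e\cap f\neq\emptyset}w_ew_f=\sum_i D_i^2-2\sum_e w_e^2$ with the weighted degree $D_i$ constant on each category. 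Your route is more compact and less error-prone (a single three-term decomposition instead of roughly ten category-level subcases), and it makes the structural constants $\sum_e w_e$, $\sum_e w_e^2$, $\sum_i D_i^2$ transparent; the paper's route has the minor advantage of producing the intermediate quantities $\bfEP[R_A^2]$, $\bfEP[R_B^2]$, $\bfEP[R_AR_B]$ explicitly, which it reuses almost verbatim for the bootstrap-null moments in the proof of Lemma \ref{thm:normality_B}.
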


The proof of Lemma \ref{thm:EPVP} is given in Supplementary material \ref{sec:prooflemma}.

We need conditions to guarantee the convergence to normality of $R_{C_0}$ after standardization by its mean and variance.
\begin{condition}\label{condition:hub}
  $$\sum_u m_u(m_u +|\mathcal{E}_u^{C_0}|) (m_u + \sum_{v\in\mathcal{V}_u} m_v + |\mathcal{E}_{u,2}^{C_0}|) \sim o(K^{3/2}),$$
\vspace{-1em}
$$\sum_{(u,v)\in  C_0}(m_u + m_v + |\mathcal{E}_u^{C_0}| + |\mathcal{E}_v^{C_0}|) (m_u + m_v + \sum_{w\in (\mathcal{V}_u \cup \mathcal{V}_v)} m_w + |\mathcal{E}_{u,2}^{C_0}| + |\mathcal{E}_{v,2}^{C_0}|) \sim o(K^{3/2}).$$
\end{condition}
\noindent Condition \ref{condition:hub} constrains the size of ``hubs'' in the graph: The node degrees in $ C_0$ and the number of observations in each category must not be too large.  It can be simplified to stronger conditions that are easier to comprehend, for example the following.
\begin{customcnd}{$1''$}\label{condition:hub2}
  $\beta^6\lambda^2$ and $\lambda^8$ are both $o(K)$.
\end{customcnd}
\noindent  The second condition is usually trivial:
\begin{condition}\label{condition:other}
  $N, ~|C_0|,$  and $\sum_{(u,v)\in  C_0}\frac{1}{m_um_v}$ are all  $\mathcal{O}(K)$.
\end{condition}

The asymptotic distribution of the standardized form of $R_{C_0}$ is given in the following theorem.
\begin{theorem}\label{thm:normality_P}
Assume that Conditions \ref{condition:hub} and \ref{condition:other} hold.  Under the permutation null the standardized statistic
$(R_{ C_0} - \bfEP[R_{ C_0}])/\sqrt{\VP[R_{ C_0}]}$ converges in distribution to $N(0,1)$ as $K\rightarrow \infty$ and $n_a/N$ is bounded away from 0 and 1.
\end{theorem}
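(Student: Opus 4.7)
The plan is to recast $R_{C_0}$ in the classical graph-based two-sample form and then apply a Stein-type combinatorial central limit theorem. First I would rewrite
\begin{equation*}
R_{C_0} = \sum_{(i,j)} w_{ij}\, I_{g_i \neq g_j},
\end{equation*}
where the weighted graph $G^{\ast}$ on the $N$ subjects puts weight $2/m_k$ on every intra-category pair in category $k$ and weight $1/(m_u m_v)$ on every cross-category pair with $(u,v)\in C_0$. In this form $R_{C_0}$ becomes a weighted edge-count statistic of the Friedman-Rafsky type, and the permutation null corresponds to a uniformly random group labelling of the vertices of $G^{\ast}$.

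Second, I would establish normality under a Bernoulli companion model in which each subject's label is drawn independently with $\bfP(g_i=a)=n_a/N$. Under this model the summands $\{w_{ij} I_{g_i\neq g_j}\}_{(i,j)\in G^{\ast}}$ are locally dependent: two summands are correlated only if the corresponding edges share a vertex. Applying Stein's method under local dependence (for instance a Chen-Shao bound) reduces the CLT to showing that a sum of third-order cross-weight products is of smaller order than $\VP[R_{C_0}]^{3/2}$. Once the Bernoulli CLT is established, I would transfer to the permutation null by viewing the permutation distribution as the Bernoulli distribution conditioned on $\sum_i I_{g_i=a}=n_a$, using a joint CLT for the pair $(R_{C_0},\,\sum_i I_{g_i=a})$ together with a standard Slutsky-style conditioning argument to push the limit through.

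The main obstacle will be the bookkeeping of the local-dependence sum. When the Stein remainder is expanded, each term corresponds to a triple of edges of $G^{\ast}$ that pairwise share an endpoint or share a neighbor via $C_0$; after regrouping these triples by which vertex (or which $C_0$-edge) they are ``anchored'' at, the within-category contribution is bounded by a sum over vertices $u$ of the form $m_u(m_u+|\mathcal{E}_u^{C_0}|)\bigl(m_u+\sum_{v\in\mathcal{V}_u}m_v+|\mathcal{E}_{u,2}^{C_0}|\bigr)$, and the cross-category contribution by the analogous sum along each edge of $C_0$. These are exactly the two left-hand sides in Condition~\ref{condition:hub}, which is therefore tailored to drive the Stein remainder to zero.

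Finally, Condition~\ref{condition:other} plays a complementary role: combined with the explicit formula \eqref{eq:VP_RG0} and the assumption that $n_a/N$ stays bounded away from $0$ and $1$, it guarantees $\VP[R_{C_0}]$ is of exact order $K$ (not lower), so that the $o(K^{3/2})$ control from Condition~\ref{condition:hub} does translate into an $o(\VP[R_{C_0}]^{3/2})$ bound on the Stein remainder. Combining the Bernoulli-level CLT with the conditioning step then yields the claimed $N(0,1)$ limit for the standardized statistic.
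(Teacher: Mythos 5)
Your proposal is correct and follows essentially the same route as the paper: the paper likewise first proves the CLT under the independent-label (bootstrap) companion model via Stein's method for locally dependent summands---with the two sums in Condition 1 driving the Stein remainder to zero---and then transfers to the permutation null through a joint CLT for $(R_{C_0}, n_a^B)$ and conditioning on $n_a^B=n_a$, using Condition 2 (together with $n_a/N$ bounded away from $0$ and $1$) to keep the variance of order $K$ and the correlation of the two components bounded away from one. The only cosmetic difference is that the paper groups the summands at the level of category-wise and $C_0$-edge-wise mixing potentials rather than individual subject pairs, which accomplishes exactly the regrouping by anchor vertex or anchor edge that you describe.
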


The proof of Theorem \ref{thm:normality_P} is given in Supplementary material \ref{sec:proofthm}.

Theorem \ref{thm:normality_P} can be applied to any type of graph, allowing for repeated observations of each node.  Since the statistics in \cite{friedman1979multivariate} and \cite{rosenbaum2005exact} do not allow ties, their asymptotic normality results are also restricted to the case where each node is observed only once.  To compare Theorem \ref{thm:normality_P} to its counterparts, we let $G=C_0$ and assume that $m_u \equiv 1$.  Thus  $N=K$, and $\sum_{(u,v)\in C_0}\frac{1}{m_um_v} = |C_0| = |G|.$  Condition 2 requires that $|G|\sim \mathcal{O}(K)$ and Condition \ref{condition:hub} can be simplified to
$$\sum_u|\mathcal{E}_u^{G}| |\mathcal{E}_{u,2}^{G}| \sim o(K^{3/2}),$$
\vspace{-1.5em}
$$\sum_{(u,v)\in  G} (|\mathcal{E}_u^{G}| + |\mathcal{E}_v^{G}|) (|\mathcal{E}_{u,2}^{G}| + |\mathcal{E}_{v,2}^{G}|) \sim o(K^{3/2}).  $$

Theorem \ref{thm:normality_P} implies the asymptotic normality result in \cite{rosenbaum2005exact} since $|\mathcal{E}_u^{G}|\equiv 1, |\mathcal{E}_{u,2}^{G}|\equiv 1, |G|=K/2$ for MDP.  Friedman and Rafsky proved a more general condition for asymptotic normality of sums \eqref{sumedges} after standardization: For sparse graphs where $|G| \sim \mathcal{O}(K)$, the number of edge pairs that share a common node must be $\mathcal{O}(K)$.  Condition \ref{condition:hub} is neither stronger or weaker than Friedman and Rafsky's condition.  For example, a graph with one node of degree $K^{1/2}$ and all other nodes of degree 1 satisfies Friedman and Rafsky's condition but not Condition \ref{condition:hub}, since $\sum_{(u,v)\in  G}|\mathcal{E}_u^{G}||\mathcal{E}_{u,2}^{G}| = O(K^{3/2})$.  On the other hand, a graph with $\sqrt{K}$ nodes of degree $K^{0.3}$ and all other nodes of degree 1 satisfies Condition \ref{condition:hub} but not Friedman and Rafsky's condition.

\subsection{$T_{ C_0}$}
\label{sec:r_umst}
Here is the counterpart of Lemma \ref{thm:EPVP} for $R_{C_0}$.  Its proof is given in Supplementary material \ref{sec:proof-lemma-refl}.

\begin{lemma}\label{lemma:EPVP_TC0}
The mean and variance of $T_{ C_0}$ under the permutation null are
 \begin{align}
  \label{eq:EP_TC0}
  \bfEP[T_{ C_0}] & = (\sum_u m_u(m_u-1) + 2\sum_{(u,v)\in C_0} m_um_v ) p_1, \\
  \label{eq:VP_TC0}
  \VP[T_{ C_0}] & = (p_1-p_2) \sum_um_u(m_u + \sum_{v\in \mathcal{V}_u} m_v -1) (m_u + \sum_{v\in \mathcal{V}_u} m_v -2) \\
& \quad \quad + (p_1 - p_2/2)(\sum_u m_u(m_u-1) + 2\sum_{(u,v)\in C_0} m_um_v ) \nonumber \\
& \quad \quad + (p_2 - 4p_1^2) (\sum_u m_u(m_u-1) + 2\sum_{(u,v)\in C_0} m_um_v )^2, \nonumber
\end{align}

\vspace{-0.5em}

where $p_1$ and $p_2$ are given in \eqref{eq:p1p2}.

\end{lemma}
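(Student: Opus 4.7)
The plan is to realize $T_{C_0}$ as a Friedman--Rafsky--style statistic on an auxiliary graph $G^*$ on the subjects, and then apply a standard moment expansion under the permutation null. Build $G^*$ on vertex set $\{Y_1,\dots,Y_N\}$ by taking as edges (i) every unordered within-category pair and (ii) every unordered pair $\{i,j\}$ with $i\in\calC_u$, $j\in\calC_v$ and $(u,v)\in C_0$. Since $n_{au}n_{bu}$ counts within-category opposite-label pairs and $n_{au}n_{bv}+n_{av}n_{bu}$ counts cross-category opposite-label pairs, we obtain the clean representation
\[
  T_{C_0}=\sum_{\{i,j\}\in G^*}I_{g_i\neq g_j}, \qquad |G^*|=\tfrac{1}{2}\Bigl(\sum_{u=1}^K m_u(m_u-1)+2\sum_{(u,v)\in C_0} m_u m_v\Bigr),
\]
so the problem reduces to computing the first two moments of an edge-count statistic on $G^*$.

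Given this representation, the mean formula \eqref{eq:EP_TC0} drops out immediately from linearity of expectation together with the elementary identity $\bfPP(g_i\neq g_j)=2p_1$ for any fixed pair of distinct subjects.

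For the variance I would expand $\bfEP[T_{C_0}^2]=\sum_{e_1,e_2\in G^*}\bfEP[I_{e_1}I_{e_2}]$ and partition pairs of edges by how many endpoints they share, using three standard Friedman--Rafsky primitives: $\bfEP[I_e]=2p_1$; $\bfEP[I_{e_1}I_{e_2}]=p_1$ whenever two distinct edges share exactly one vertex (which comes from the three-subject identity $n_a(n_a-1)n_b+n_b(n_b-1)n_a=n_an_b(N-2)$); and $\bfEP[I_{e_1}I_{e_2}]=p_2$ for two vertex-disjoint edges. The only substantive combinatorial step is evaluating $\sum_v d_v(d_v-1)$ for the degree sequence of $G^*$: any subject $i\in\calC_u$ has $G^*$-degree $(m_u-1)+\sum_{v\in\mathcal{V}_u}m_v$, so setting $s_u:=m_u+\sum_{v\in\mathcal{V}_u}m_v$ gives $\sum_v d_v(d_v-1)=\sum_u m_u(s_u-1)(s_u-2)$, which is exactly the combination appearing in the first line of \eqref{eq:VP_TC0}. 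Subtracting $(\bfEP[T_{C_0}])^2$ from the expansion of the second moment and collecting by the three primitives then produces the three terms of \eqref{eq:VP_TC0}.

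The only real obstacle is bookkeeping: $G^*$ is a union of within-category cliques ``glued'' across categories by $C_0$, so both the edge count and the degree sums split into a within-category piece and a between-category piece, and the edge-pair enumeration involves four sub-cases (two within, two between, or one of each). Keeping these two pieces separate, exactly as in the proof of Lemma \ref{thm:EPVP} for $R_{C_0}$ but with unit weights in place of the $1/m_u$ and $1/(m_u m_v)$ normalizers used there, makes the whole computation routine.
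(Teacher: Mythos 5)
Your proposal is correct and is essentially the paper's own proof: your auxiliary graph $G^*$ is exactly the graph $\overline{G}$ (the uMST on subjects) used there, and the paper performs the same decomposition of $\bfEP[T_{C_0}^2]$ into identical, one-vertex-sharing, and vertex-disjoint edge pairs with values $2p_1$, $p_1$, $p_2$, together with the same degree count $d_i=(m_u-1)+\sum_{v\in\mathcal{V}_u}m_v$ for $i\in\calC_u$. One caveat: carrying your bookkeeping to the end yields a final term $(p_2-4p_1^2)\,|G^*|^2=\tfrac14(p_2-4p_1^2)\bigl(\sum_{u}m_u(m_u-1)+2\sum_{(u,v)\in C_0}m_um_v\bigr)^2$, i.e.\ a factor $1/4$ smaller than the third line of \eqref{eq:VP_TC0}; the same slip appears in the paper's final substitution, where $|\overline{G}|^2$ is effectively replaced by $(2|\overline{G}|)^2$, so you should not expect your (correct) computation to reproduce that term literally.
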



The next theorem gives a sufficient condition for asymptotic normality of $T_{C_0}$ under the permutation null.
\begin{theorem}\label{thm:normality_P_TC0}
If $\sum_um_u(m_u + \sum_{v\in \mathcal{V}_u} m_v)^2 \sim \mathcal{O}(N)$, then under the permutation null distribution, the standardized statistic $(T_{ C_0} - \bfEP[T_{ C_0}])/\sqrt{\VP[T_{ C_0}]},$ where $\bfEP[T_{ C_0}]$ and $\VP[T_{ C_0}]$ are given in \eqref{eq:EP_TC0} and \eqref{eq:VP_TC0}, converges in distribution to $N(0,1)$ as $N\rightarrow \infty$, and $n_a/N$ bounded away from 0 and 1.
\end{theorem}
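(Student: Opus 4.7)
The plan is to recast $T_{C_0}$ as a Friedman--Rafsky-style cross-group edge count on a subject-level graph and then follow the same moment-based / Stein's-method route as Theorem \ref{thm:normality_P} in Appendix \ref{sec:proofthm}, simplified because $T_{C_0}$ carries no $1/m_u$ weights.

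First I would build an auxiliary graph $G$ on the $N$ subjects by declaring $i \sim j$ whenever $Y_i = Y_j$ or $(Y_i, Y_j) \in C_0$. Counting cross-group pairs directly gives
$$T_{C_0} = \sum_{(i,j) \in G} I_{g_i \neq g_j},$$
so $T_{C_0}$ is the standard graph-based two-sample statistic on $G$. The degree of a subject in category $u$ is $d_i = (m_u - 1) + \sum_{v \in \mathcal{V}_u} m_v$, and the hypothesis immediately yields
$$\sum_i d_i^2 = \sum_u m_u \Bigl(m_u - 1 + \sum_{v \in \mathcal{V}_u} m_v\Bigr)^2 \le \sum_u m_u \Bigl(m_u + \sum_{v \in \mathcal{V}_u} m_v\Bigr)^2 = \mathcal{O}(N),$$
so that the subject graph is sparse, with $|G| = \tfrac{1}{2}\sum_i d_i = \mathcal{O}(N)$ by Cauchy--Schwarz.

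Next I would center the indicators, $Z_{ij} := I_{g_i \neq g_j} - 2 p_1$, and analyze $\sum_{(i,j)\in G} Z_{ij}$ under the permutation null. Pairs of edges sharing no vertex have covariance only $\mathcal{O}(1/N)$ away from zero under permutation of the group labels, so the collection $\{Z_{ij}\}_{(i,j)\in G}$ is locally dependent with dependency neighbourhoods at edge $(i,j)$ of size $d_i + d_j - 2$. A Stein's method CLT for locally dependent sums (or equivalently the moment-matching calculation used for $R_{C_0}$ in Appendix \ref{sec:proofthm}) then reduces the normal approximation error to combinatorial quantities of the form $\sum_{(i,j)\in G}(d_i + d_j)^k$ for $k = 2, 3$, each of which is $\mathcal{O}(N)$ by the hypothesis.

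The main obstacle I foresee is showing that $\VP[T_{C_0}]$ is itself of order $N$, so that the standardization does not inflate the Stein bound. From \eqref{eq:VP_TC0}, the dominant contribution comes from $(p_1 - p_2/2)\bigl(\sum_u m_u(m_u-1) + 2\sum_{(u,v)\in C_0} m_um_v\bigr) = (p_1 - p_2/2)\cdot 2|G|$, since $p_1 - p_2/2$ is bounded below by a positive constant whenever $n_a/N$ is bounded away from $0$ and $1$, while $p_2 - 4p_1^2 = \mathcal{O}(1/N)$ so the quadratic term contributes only $\mathcal{O}(N)$. Ruling out degenerate cases where $|G|$ is sublinear in $N$ and combining the resulting lower bound $\VP[T_{C_0}] \gtrsim N$ with the Stein bound then gives convergence of $(T_{C_0} - \bfEP[T_{C_0}])/\sqrt{\VP[T_{C_0}]}$ to $N(0,1)$.
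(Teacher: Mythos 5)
Your proposal is correct and follows essentially the same route as the paper: the paper's proof likewise identifies $T_{C_0}$ with the Friedman--Rafsky cross-group edge count on the subject-level graph $\overline{G}$ (the uMST on subjects), notes that a subject in category $u$ has degree $(m_u-1)+\sum_{v\in\mathcal{V}_u}m_v$, and concludes from the stated hypothesis via Friedman and Rafsky's sufficient condition $\sum_{i=1}^N |\mathcal{E}_i^{\overline{G}}|(|\mathcal{E}_i^{\overline{G}}|-1)\sim\mathcal{O}(N)$. The only difference is that you sketch the Stein's-method and variance-lower-bound details that the paper leaves implicit by citing \cite{friedman1979multivariate}.
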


\begin{proof}
  Let $\overline{G}$ be the uMST on subjects.  Then as long as $\sum_{i=1}^N |\mathcal{E}_i^{\overline{G}}|(|\mathcal{E}_i^{\overline{G}}|-1) \sim \mathcal{O}(N)$, asymptotic normality can be ensured following \cite{friedman1979multivariate}'s result.  Notice that if $i$ is in category $u$, then $|\mathcal{E}_i^{\overline{G}}| = (m_u-1) +\sum_{\mathcal{V}_u} m_v$.
\end{proof}

\subsection{Checking the $P$-values Under Normal Approximations}
\label{sec:pvaluecheck}
We checked the normal approximations to the $p$-values of the three graph-based statistics -- $R_\CuMST$, $R_\CuNNG$ and $R_\uMST$ -- through simulation.  We adopted the setting of the haplotype example.  In each simulation run, $N$ haplotypes with length $l$ were generated uniformly from all possible haplotypes with length $l$.  They were assigned to the groups with equal probability.  For each simulation run, we calculated the difference between theoretical $p$-values from the normal approximation and the permutation $p$-values from 10,000 permutations for the three statistics.  We considered different sparsity settings by varying $l$, which controls the number of categories, and $N$.  Under each setting, 100 simulation runs were done, with the boxplots of the differences between theoretical and simulation $p$-value shown in Figure \ref{fig:pvalue}.  We increased $l$ from 6 to 10, and thus the number of possible categories considered is from 64 to 1024.  The sample size $N$ varies from 100 to 1000.  This spectrum of values is reasonable for a genetic association study.

Simulation results under this setting show that the normal approximation is better for $R_\CuMST$ and $R_\CuNNG$ than for $R_\uMST$.  Accuracy of normal approximations improved for all statistics as $l$ and $N$ increase.  For $R_\CuMST$ and $R_\CuNNG$, when the number of possible categories is larger than 256 and the number of observations larger than 200, the $p$-value from normal approximation was quite accurate.  For $R_\uMST$, the number of observations needs to be larger than 500 to achieve similar accuracy.  For $R_\uMST$, when the number of possible categories is larger than the number of observations, the $p$-value calculated from the normal approximation was negatively biased, and thus less conservative.  The bias is less severe for $R_\CuMST$ and $R_\CuNNG$, while still problematic when the number of possible categories is 1024 and the number of observation only 100.  Skewness correction can be done to make the theoretical $p$-values more accurate, but when $N$ is small, it would be easier to do permutation directly.

\begin{figure} 
  \centering
  \includegraphics[width=1.1\textwidth]{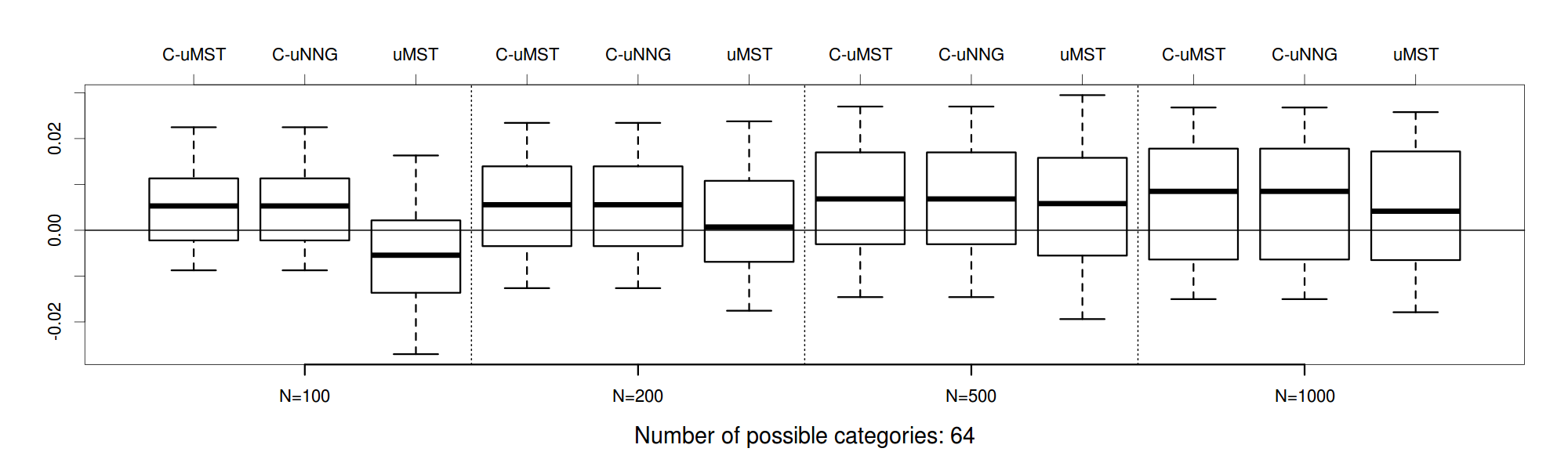}
  \includegraphics[width=1.1\textwidth]{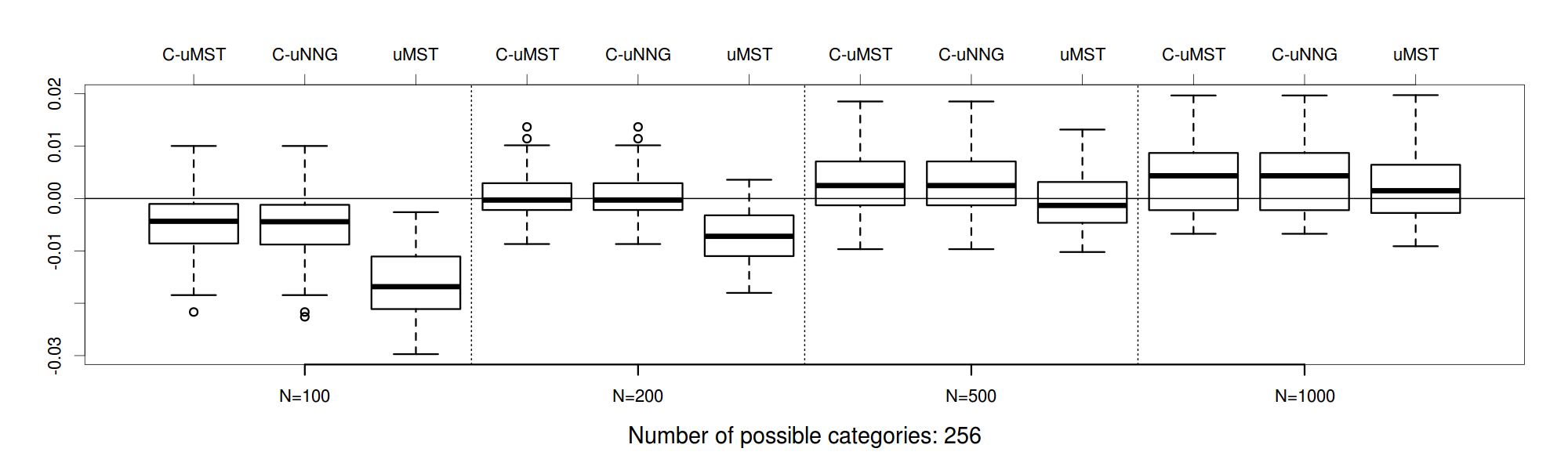}
  \includegraphics[width=1.1\textwidth]{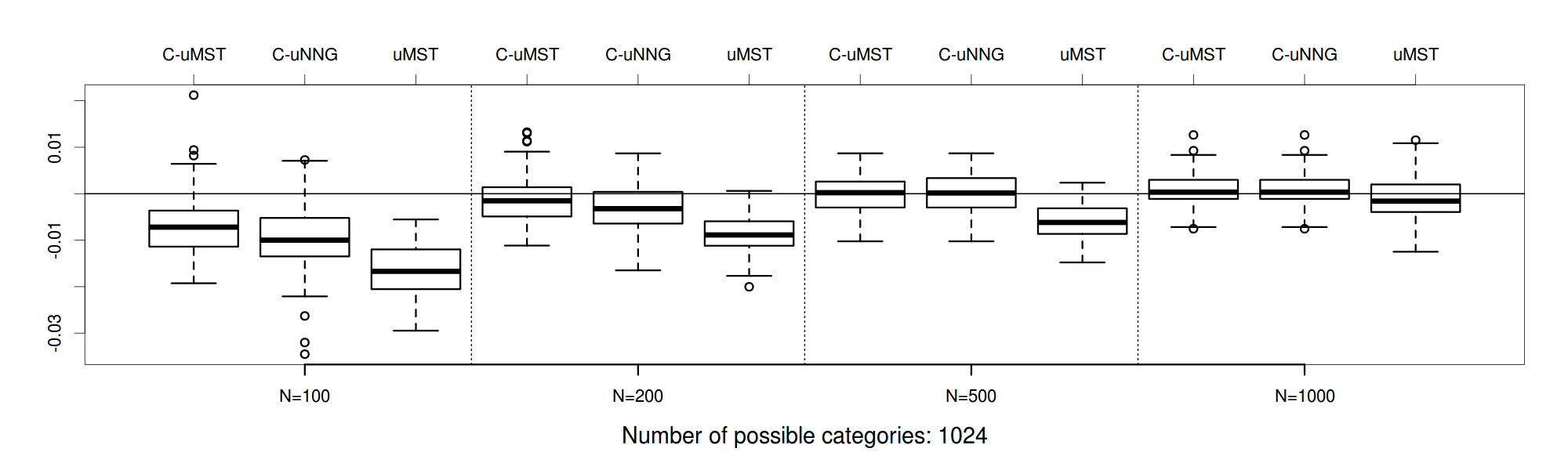}
  \caption{Boxplots for the differences between $p$-values calculated from normal approximation and 10,000 permutations.}
  \label{fig:pvalue}
\end{figure}

\section{Conclusions and Discussion}
\label{sec:conclusion}

Our approach to compare two categorical samples, useful when the contingency table is sparsely populated, utilizes a graphical encoding of the similarity between categories to improve the power of two-sample comparison.  Simulations and examples show that utilizing graphical information improves the power over the deviance and Pearson's Chi-square tests.  Proposed statistics are shown to be asymptotically normal after standardization under assumptions that limit the hub size and density of the graph.  This allows instantaneous type I error control for large data sets.

The power of the new approach depends on the choice of an informative similarity measure between categories and relies on domain knowledge that is specific to the application.  
When the number of categories is large, drawing relationships between categories is a necessary and often default step in analyzing the data.


Generalization of our approach to multi-sample comparison is straightforward by letting $g_i$ take $K^\prime$ distinct values, where $K^\prime$ is the number of groups.

\appendix

\section{The Test Statistic Based on $R_\aMDP$}
\label{sec:Rmdp}

We assume $N$, the total number of observations, to be even.  Let $K_0$ be the number of categories containing an odd number of subjects.  Since $N$ is even, $K_0$ is even. ($K_0$ can be 0.).  Without loss of generality, let categories $1,\dots,K_0$ be the categories containing an odd number of subjects, and categories $K_0+1,\dots, K$ be the categories containing an even number of subjects.  More notations are as follows.

\begin{itemize}
\item $\calA = \{ \bx = (x_1, \dots, x_{K_0})^T:
  x_i\in\{a,b\}, i=1,\dots,K_0\}$: all possible
  combinations of group identities of the subjects with one from each of the categories containing an odd number of subjects.
\item $R_0(n_a, n_b)$: the number of edges connecting subjects from different
  groups averaged over all perfect pairings of $n_a$ points from group $a$ and $n_b$
  points from group $b$ in the same category, with $n_a + n_b$ being even.
\item $R_\bx, \bx\in \calA$: the number of edges connecting subjects from
   different groups averaged over all MDPs on categories $1,\dots,K_0$.
\end{itemize}

\begin{assumption}
  \label{assump:cm} If a category has an even number of subjects, the
subjects are paired within the category.
\end{assumption}

Assumption \ref{assump:cm} is usually true for MDP on subjects for categorical data.  It is explicitly stated here to avoid complications when the triangle inequality becomes equality in the distance metric for any three categories.

\begin{proposition}
Under Assumption \ref{assump:cm}, the test
statistic based on averaging \eqref{sumedges} over all MDPs is:
\begin{align}
  \label{eq:Rmdp}  R_\aMDP
&= \sum_{k = K_0 + 1}^K R_0(n_{ak}, n_{bk}) +
\frac{1}{\prod_{k=1}^{K_0} m_k} \sum_{\bx\in\calA} \left\{
\prod_{i=1}^{K_0}n_{x_i i} \left[ R_\bx +
\sum_{j=1}^{K_0} R_0(n_{x_j j} -1, n_{x_j^c j})
\right] \right\},
\end{align} where $x_i^c = \left\{ \begin{array} {ll} b &
\text{if } x_i = a \\ a & \text{if } x_i = b \end{array}
\right.$,
\begin{align}
\label{eq:R0}
  R_0(n_a, n_b) & = \sum_{i\in \mathcal{S}}i \left(\begin{array}{c}n_a\\i\end{array}\right)  \left(\begin{array}{c}n_b\\i\end{array}\right)i!\ (n_a-i-1)!!\ (n_b-i-1)!!/ (n_a+n_b-1)!!
\end{align}
with
\begin{align}
\mathcal{S} & = \left\{\begin{array}{ll} \{0,2,\dots, n_a\wedge n_b\} & \text{ if } n_a \text{ and } n_b \text{ both even} \\ \{1,3,\dots, n_a\wedge n_b\} & \text{ if } n_a \text{ and } n_b \text{ both odd}\end{array} \right. , \nonumber\\
  \label{eq:Rx}
  R_\bx & = |\Omega^*|^{-1}\sum_{\omega^* \in \Omega^*} \sum_{(i,j)\in \omega^*} I_{x_i\neq x_j},
\end{align}
where $\omega^*$ is an MDP on categories $1,\dots,K_0$, and $\Omega^*$ is the set of all these $\omega^*$'s.
\end{proposition}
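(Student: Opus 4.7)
The plan is to exploit the structural decomposition of MDPs on subjects implied by Assumption \ref{assump:cm}. Under this assumption every MDP factorizes into three pieces: (a) a perfect internal matching on each even-sized category $k > K_0$; (b) for each odd-sized category $k \leq K_0$, a choice of one ``leftover'' subject together with a perfect internal matching of the remaining $m_k-1$ subjects; and (c) an MDP on the $K_0$ leftovers themselves. Assumption \ref{assump:cm} directly forces (a); optimality combined with parity forces each odd category to contribute exactly one leftover, since contributing three or more would permit a strictly shorter matching by pairing two leftovers internally at distance zero. The resulting factorization is uniform across all MDPs on subjects, and therefore $R_\aMDP$ decomposes as a sum of contributions from (a), (b) and (c), averaged over their respective configurations.

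First I would derive $R_0(n_a,n_b)$ as the expected number of cross-group edges in a uniformly random perfect matching of $n_a$ group-$a$ and $n_b$ group-$b$ subjects (with $n_a+n_b$ even). Using the fact that the total number of perfect matchings on $2n$ items is $(2n-1)!!$, I would count matchings with exactly $i$ cross-group pairs by selecting the $i$ participants from each group in $\binom{n_a}{i}\binom{n_b}{i}$ ways, pairing them across in $i!$ ways, and matching the remaining within-group subjects in $(n_a-i-1)!!\,(n_b-i-1)!!$ ways. Dividing by $(n_a+n_b-1)!!$, multiplying by $i$, and summing over the valid parity set $\mathcal{S}$ yields \eqref{eq:R0}. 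Applying this formula to each even category supplies the first term $\sum_{k=K_0+1}^K R_0(n_{ak}, n_{bk})$ of \eqref{eq:Rmdp}.

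For the odd categories I would condition on the vector $\bx = (x_1,\dots,x_{K_0}) \in \calA$ recording the group label of each leftover subject. For fixed $\bx$, the number of ways to select leftovers is $\prod_{i=1}^{K_0} n_{x_i i}$, since category $i$ offers $n_{x_i i}$ choices of a group-$x_i$ subject. Once the leftover with group $x_j$ is removed from odd category $j$, the remaining $n_{x_j j}-1$ group-$x_j$ and $n_{x_j^c j}$ group-$x_j^c$ subjects pair internally and contribute $R_0(n_{x_j j}-1, n_{x_j^c j})$ cross-group edges on average by the formula just derived. The $K_0$ leftovers are themselves matched by an MDP on categories, contributing $R_\bx$ cross-group edges on average by \eqref{eq:Rx}. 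Summing over $\bx \in \calA$ with weight $\prod_i n_{x_i i}$ and dividing by the total weight $\sum_{\bx\in\calA}\prod_i n_{x_i i} = \prod_{k=1}^{K_0} m_k$ then produces the second term of \eqref{eq:Rmdp}.

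The main obstacle is step one, verifying that the MDP count factorizes cleanly. The key worry is pathological ties: in principle an MDP could ``swap'' a subject between a leftover role and an internal pairing role without changing the total distance, breaking the clean factorization. Assumption \ref{assump:cm} is exactly what rules out the analogous swap for even categories, and the parity argument above rules out odd categories with three or more leftovers because such a configuration incurs strictly positive extra distance relative to the single-leftover alternative. Once these points are accepted, the remainder is routine enumeration using only the standard identity $(2n-1)!! = (2n)!/(2^n n!)$ and the definition of $R_\bx$.
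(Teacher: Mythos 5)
Your proposal is correct and follows essentially the same route as the paper: the identical counting argument for $R_0$ via the number of perfect pairings with exactly $i$ cross-group edges out of $(n_a+n_b-1)!!$, followed by the decomposition of each MDP on subjects into within-category matchings plus an embedded MDP on the odd-count categories, with the leftover-label vector $\bx$ weighted by $\prod_i n_{x_i i}/\prod_k m_k$. Your explicit parity argument for why each odd category contributes exactly one leftover, and the verification that $\sum_{\bx}\prod_i n_{x_i i}=\prod_k m_k$, simply make precise steps the paper leaves implicit.
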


\begin{proof}
  Consider the simpler case of one category with $n_a$ subjects from group $a$ and $n_b$ subjects from group $b$, with $n_a+n_b$ even.  Since all subjects are in the same category, any perfect pairing is an MDP. There are in total
$(n_a+n_b-1)!!$ different perfect pairings.

When both $n_a$ and $n_b$ are even, the possible numbers of edges connecting different groups are $0, 2, \dots, n_a \wedge n_b$.  Among all the $(n_a+n_b-1)!!$ perfect pairings, the
number of perfect pairings having $i\in \{0, 2, \dots, n_a \wedge n_b\}$
edges connecting different groups is
\begin{equation}
  \label{eq:1}
  \left(\begin{array}{c}n_a\\i\end{array}\right)  \left(\begin{array}{c}n_b\\i\end{array}\right)i!\ (n_a-i-1)!!\ (n_b-i-1)!!.
\end{equation}
When both $n_a$ and $n_b$ are odd, the possible numbers of edges connecting different groups are $1, 3, \dots, n_a \wedge n_b$.  Among all the $(n_a+n_b-1)!!$ perfect pairings, the
number of perfect pairings having $i\in \{1, 3, \dots, n_a \wedge n_b\}$
edges connecting different groups also has the form \eqref{eq:1}. \eqref{eq:R0} follows immediately.

Under Assumption \ref{assump:cm}, an MDP on all subjects would be an MDP on categories $1,\dots,K_0$, ($\omega^*$), embedded on the subjects similar to the MST case, with all other subjects paired within each category, so \eqref{eq:Rmdp} follows naturally.

\end{proof}
\begin{remark}
  \emph{If $N$, the total number of observations, is odd, we can add a pseudo category with one subject, whose distance to any other category is 0.  All derivations are the same, except that the edge containing the pseudo category is discarded from the MDP on categories in later steps.}
\end{remark}

\section{Computation Time for $R_\aMST$ and $R_\uMST$}
\label{sec:comptime}

The main tasks for computing $R_\aMST$ and $R_\uMST$ are to enumerate all MSTs on categories for $R_\aMST$ and to list the edges in $\Mstar$ for $R_\uMST$.  Other tasks can be finished in $\mathcal{O}(K)$ time.


Let $G$ be the complete graph on $K$ categories, $|G| = K(K-1)/2$. \cite{eppstein1995representing} proposed a graph operation called the sliding transformation which, when applied to $G$, produces an equivalent graph such that the MSTs on categories correspond one-for-one with the spanning trees of the equivalent graph.  The enumeration of all spanning trees, without having to optimize for total distance, is relatively straightforward.  Thus, we adopted the following computational approach.  Use Eppstein's method to construct the equivalent graph of $G$, enumerate all spanning trees of the equivalent graph, then transform back to get the set of MSTs on $G$.  The sliding transformation constructs the equivalent graph in  $\calO(|G|+ K\log K) = \calO(K^2)$ time.  To perform the sliding transformation, an initial MST is needed.  Prim's algorithm can be used to obtain the initial MST, which requires $\calO(K^2)$ time, not increasing the time complexity.  The theoretical justification of this algorithm can be found in \cite{eppstein1995representing} and \cite{chen2013graph}, which completes many of the proofs of \cite{eppstein1995representing}.

After removing any loops formed during the the sliding transformations, each remaining edge appears in at least one spanning tree of the equivalent graph, thus appearing in at least one MST on $G$.  Now we have the list of edges in uMST on $G$, and thus $R_\uMST$ can be calculated in $\mathcal{O}(K^2)$ time.

For enumerating all spanning trees of the equivalent graph, the algorithm proposed by \cite{shioura1995efficiently} is used; it requires $\mathcal{O}(K+|G|+M) = \mathcal{O}(K^2+M)$ computation time, proven to be optimal in time complexity.    Shioura and Akihisa's algorithm starts from a spanning tree formed by depth-first search, then replaces one edge at a time using cycle structures in the graph, traversing the space of all spanning trees of the graph.  Hence, computing $R_\aMST$ takes $\mathcal{O}(K^2+M)$ time.


\noindent {\large\bf Acknowledgments}

Hao Chen was supported by NSF DMS Grant 1043204 and an NIH Training Grant.  Nancy R. Zhang was supported by NSF DMS Grant 0906394 and NIH Grant R01 HG006137-01.  We thank one of the reviewers for bringing Critchlow (1985)'s work to our attention. 

\par


\bibliographystyle{apa}
\bibliography{cat_comp}


\vskip .65cm
\noindent
Department of Statistics, Stanford University
\vskip 2pt
\noindent
E-mail: haochen@stanford.edu
\vskip 2pt

\noindent
Department of Statistics, The Wharton School, University of Pennsylvania
\vskip 2pt
\noindent
E-mail: nzh@wharton.upenn.edu
\vskip .3cm


\end{document}